\newtheorem{theorem}{Theorem}
\newtheorem{lemma}{Lemma}
\newtheorem{corollary}{Corollary}
\theoremstyle{definition}
\def\hatxi{{\hat{\xi}}}
\def\hatEps{{\hat{\mathcal{E}}}}
\def\hatEps{\hat{\mathcal{E}}}
\def\Eps{{\mathcal{E}}}
\def\E{{\mathrm{E}}}
\def\cov{{\mathrm{cov}}}
\def\bX{{\boldsymbol{X}}}
\def\bU{{\boldsymbol{U}}}
\def\bbU{{\mathbf{U}}}
\def\bu{{\mathbf{u}}}
\def\bTheta{{\boldsymbol{\Theta}}}
\def\bOmega{{\boldsymbol{\Omega}}}
\def\btheta{{\boldsymbol{\theta}}}
\def\ba{{\mathbf{a}}}
\def\bh{{\mathbf{h}}}
\def\bM{{\mathbf{M}}}
\def\bEta{{\boldsymbol{\eta}}}
\def\bS{{\mathbf{S}}}
\def\bs{{\mathbf{s}}}
\def\bw{{\mathbf{w}}}
\def\bJ{{\mathbf{J}}}
\def\cE{{\mathcal{E}}}
\title{High-dimensional covariance estimation by pairwise likelihood truncation}
\author{Alessandro Casa$^1$,  Davide Ferrari$^2$ and Zhendong Huang$^3$ \\
{\small $^1$Departments of Economics, Università degli Studi di Bergamo} \\
{\small $^2$Faculty of Economics and Management, Free University of Bozen-Bolzano} \\
{\small $^3$School of Mathematics and Statistics, The University of Melbourne}}
\date{}
\begin{document}

\maketitle

\begin{abstract}

Pairwise likelihood is a useful approximation to the full likelihood function for covariance estimation in high-dimensional context. It simplifies high-dimensional dependencies by combining marginal bivariate likelihood objects, thus making estimation more manageable. In certain models, including the Gaussian model, both pairwise and full likelihoods are maximized by the same parameter values, thus retaining optimal statistical efficiency, when the number of  variables is fixed. Leveraging on this insight, we introduce estimation of sparse high-dimensional covariance matrices by maximizing a truncated version of the pairwise likelihood function, obtained by including pairwise terms corresponding to nonzero covariance elements. To achieve a meaningful truncation, we propose to minimize the $L_2$-distance between pairwise and full likelihood scores plus an $L_1$-penalty discouraging the inclusion of uninformative terms. Differently from other regularization approaches, our method focuses on selecting whole pairwise likelihood objects rather than shrinking individual covariance parameters, thus retaining the inherent unbiasedness of the pairwise likelihood estimating equations. This selection procedure is shown to have the selection consistency property as the covariance dimension increases exponentially fast. Consequently, the implied pairwise likelihood estimator is consistent and converges to the oracle maximum likelihood estimator assuming knowledge of nonzero covariance entries.
\end{abstract}

\noindent {\bf Keywords:} Composite likelihood, High-dimensional covariance,  $L_1$-penalty, Pairwise likelihood, Sparse covariance.

%%%%%%%%%%%%%%%%%%%%%%%%%%%%%%%%%%%%%%%%%%%%%%
%% Please use \tableofcontents for articles %%
%% with 50 pages and more                   %%
%%%%%%%%%%%%%%%%%%%%%%%%%%%%%%%%%%%%%%%%%%%%%%
%\tableofcontents

%%%%%%%%%%%%%%%%%%%%%%%%%%%%%%%%%%%%%%%%%%%%%%
%%%% Main text entry area:

\section{Introduction}\label{sec:introduction}

Covariance estimation is a fundamental problem in multivariate analysis and has a range of applications in many scientific fields.  In genetics and genomics, it helps our understanding of complex traits by identifying genetic correlations. In finance,  it is essential for risk management and asset allocation strategies. In environmental science and climate studies, it helps us represent interactions between climate variables and predicting future trends. In statistics, covariance estimation is a cornerstone for more complex dependence models such as principal component analysis, factor analysis, and linear models. As new technology emerges, covariance estimation becomes increasingly challenging due to the growth of the number of variables involved, leading traditional methods to become more computationally intensive and susceptible to statistical error accumulation.

Let $\bX = (X_1, \dots, X_p)^\top$ be a $p$-variate random vector with zero mean and  covariance $\cov(\bX) = \bTheta=(\theta_{jk})_{p\times p}$. We use $\btheta = \text{vech}(\bTheta)$ to denote the $m \times 1$ column vector obtained by vectorizing the upper triangular part of $\bTheta$, where $m = p(p+1)/2$ is the total number of variance and covariance parameters. Given an i.i.d. random sample   $\bX^{(1)}, \dots,  \bX^{(n)}$ from the distribution of $\bX$, we wish to estimate the covariance matrix $\bTheta$.  If $\bX$ follows a multivariate Gaussian distribution, the log-likelihood function ignoring any constant term is
$
\ell(\btheta) =   -\log |\bTheta| -  \text{tr}(\bTheta^{-1}\bS)
$. The sample covariance matrix
\begin{equation}\label{eq:sample_cov}
\bS = (S_{jk})_{p\times p} := \dfrac{1}{n} \sum_{i=1}^n \bX^{(i)} \bX^{(i)^\top}
\end{equation} 
maximizes $\ell(\btheta)$ and is the ML estimator of $\bTheta$.  When the data are non-Gaussian or weakly dependent, $\ell(\btheta)$ is interpreted as a quasi-likelihood function, yet it remains a valid objective function for estimation. The high-dimensional setting where the dimension $p$ is larger than the sample size $n$ is of particular interest in current applications, but poses nontrival challenges. In this setting, the ML estimator, without additional restrictions on the  covariance structure,  is singular and inconsistent
due to the accumulation of excessive estimation error and too many free parameters \citep{bai1993limit, bai1988note}. To deal with the overwhelming number of parameters,  a common assumption in the literature also adopted here is that  $\bTheta$ is sparse, meaning that a number of its entries are exactly zero. Particularly, we define the index set $\cE=\{ jk : j \le  k, \theta_{jk}\neq 0 \}$ corresponding to diagonal elements of $\bTheta$ and the pairs for which covariances are actually different from zero. We use $m_0 = \text{card}(\cE)$ to denote the cardinality of $\cE$, while $\setminus \cE = \{ jk : j < k; \theta_{jk} = 0 \}$ denotes the complement of $\cE$.

Many approaches for estimating high-dimensional covariance matrices have been proposed; e.g., see reviews in  Lam \cite{lam2020high}, Fan et al \cite{fan2016overview} and Pourahmadi \cite{pourahmadi2013high}. Given the extensive body of literature, here we focus  on penalized likelihood and thresholding methods due to their relevance to our own work. Penalized likelihood estimators achieve estimation of sparse covariance matrices by maximizing the objective function $ \ell(\btheta) -  p_\lambda(\btheta)$, where $p_\lambda(\btheta)$ is a penalty term that encourages the off-diagonal elements of $\bTheta$ to shrink towards zero \citep{bien2011sparse, chaudhuri2007estimation,lam2009sparsistency,rothman2012positive,wang2014coordinate}. Thus, a certain amount of bias in the  estimating equations is necessary to induce the desired selection effect. Theoretical guarantee for sparsistency -- the property where all zero parameters are estimated as zero with probability tending to one -- requires that the number of non-zero entries be small, $O(1)$ among $O(p^2)$ parameters, in the worst case scenario \citep{lam2009sparsistency}. Due to the connection with the Gaussian graphical models and the convenient interpretation in terms of conditional dependence, penalization of the precision matrix $\bOmega = \bTheta^{-1}$ is also frequently considered \citep{banerjee2008model,friedman2008sparse, meinshausenbuhlmann2006, scheinberg:2010, witten:2011, whittaker:1990, yuan2008efficient, yuan2007model}.

Another popular approach to obtain sparse estimates is thresholding \citep{bickel2008covariance, cai2012optimal}, which modifies the sample covariance matrix by setting a portion of the estimated off-diagonal elements to zero. Thresholding reduces noise accumulation by avoiding the estimation of irrelevant off-diagonal elements of $\bTheta$. Hard thresholding sets to zero the $jk$th covariance parameter if $|S_{jk}| \leq \psi$, for some threshold $\psi>0$. Adaptive thresholding \citep{cai2011adaptive} also takes into account heterogeneous estimation uncertainty; particularly, the estimated off-diagonal elements are set to zero if $|S_{jk}|/\text{SE}_{jk} \leq \psi$, $j \neq k$, where $\text{SE}_{jk}$ is the standard error of the marginal estimate $S_{jk}$. Generalized thresholding extends these approaches  by replacing hard thresholding operators with soft-thresholding operators \citep{rothman2009generalized}. Other established methods for high-dimensional covariance estimation   include regularization on the eigenstructure of the matrix \citep{daniels2001shrinkage,ledoitWolf2012}, banding and tapering \citep{bickelLevina2008banding,bienBandingJASA,bien2016convex, furrer2006covariance, xue2014rank}.

In this paper, we introduce a methodology for sparse covariance estimation in high-dimensional problems within the composite likelihood framework. A composite likelihood function is a type of approximation to a complex likelihood function obtained through the combination of several low-dimensional likelihood objects. Besag \cite{besag75} pioneered composite likelihood estimation for spatial data, while Lindsay \cite{Lindsay88} developed the methodology in its generality; see Varin and al. \cite{varin11} for a  survey.  A special type of composite likelihood useful for covariance estimation is the pairwise likelihood (PL), where the low-dimensional likelihoods are  based on bivariate marginal densities \citep{cox2004note}. The PL estimation framework is particularly attractive for covariance estimation since, for common exponential family models, the PL function is maximized at the same values as the full likelihood function \citep{mardia2009maximum}. In principle, this feature enables one to retain  full statistical  efficiency of the resulting estimator, provided that the correct model structure  is identified.

Overall, PL estimators show considerable potential for high-dimensional covariance estimation due to the simplicity in defining the objective function,  their computational advantages and the statistical efficiency in common model families when $p$ is small compared to the number of observations. However, the application of traditional PL estimators in large problems is still hindered by their poor statistical efficiency  in the $p>n$ setting. The main challenges are similar to those for ML estimation:  without some form of restriction on the model structure,  PL estimators accumulate excessive error and are not generally consistent as $p$ diverges \citep{cox2004note,ferrari2016parsimonious, huang2022fast}. More recently, building on the success of shrinkage methods for the full likelihood, numerous studies have extended the use of sparsity-inducing penalties to the composite likelihood framework, applying it to various specialized models \citep{bradic2011penalized, xue2012nonconcave, gao2017data, hui2018sparse}.

Here we adopt a completely different approach compared to existing penalized procedures for PL estimation. Instead of shrinking individual elements of the parameter vector $\btheta$ to zero, we propose to discourage the inclusion of entire pairwise likelihood objects in the final PL objective function.  To highlight this important feature, this approach will be referred to as truncated pairwise likelihood (TPL) estimation in the remainder of the paper. The main conceptual innovation lies in recognizing that these partial pairwise components are not strictly statistical parameters. Therefore, penalizing the log-pairwise likelihood function may not be the most suitable approach for eliminating such components, and instead we opt for an alternative statistical criterion.  Particularly, the theory of unbiased estimating equations suggests that one should minimize the distance between the candidate PL and the ML score functions or, equivalently,  maximize the statistical efficiency of the underlying PL estimator  \citep{heyde97, Lindsay&al11}. Based on this understanding, we propose to minimize a convex criterion related to statistical efficiency with the addition of a weighted $L_1$-penalty to discourage the selection of too many noisy or redundant pairwise scores terms.

One advantage of the TPL method is that it inherently yields unbiased estimates. While conventional penalization techniques introduce bias in the estimating equations to shrink estimates towards zero, our penalized criterion operates by selecting entire bivariate models, thereby preserving the unbiasedness from the corresponding pairwise score function, provided that the set of selected pairwise likelihoods is sufficiently comprehensive, i.e., it includes at least those covariances that are non-zero. Moreover, for Gaussian data, these estimates coincide with the Maximum Likelihood Estimates (MLE), thus yielding optimal efficiency. From a theoretical viewpoint, the procedure is shown to have the selection consistency property as $p$ grows exponentially in $n$ (Theorem \ref{thm:consistency}). A meaningful consequence is that selection consistency implies that our estimator converges in probability to the oracle ML estimator, i.e., the ML estimator that assumes knowledge of the true covariance structure.

Another significant contribution of this paper is to emphasize the subtle connection between pairwise likelihood estimation -- or, more broadly, composite likelihood estimation -- and adaptive thresholding. The proposed penalized criterion functions as an adaptive thresholding mechanism that adjusts standard errors using information from the selected pool of covariances; see Section \ref{sec:first_order}. Leveraging on joint information from multiple covariances rather than marginal information, this adaptive adjustment can be exploited to enhance the performance of standard adaptive thresholding approaches.

The paper is structured as follows. In Section \ref{sec:method}, we outline our main methodology, study the relationship with adaptive thresholding and present an efficient coordinate descent algorithm for its implementation. In Section \ref{sec:prop}, we study the theoretical properties of our method, including scenarios where the number of parameters diverges rapidly as the sample size increases.
In Section \ref{sec:num}, we study the finite sample properties of our method through simulations under both sparse and relatively dense covariance structures (Section \ref{sec:MC}), and by analyzing genetic single-cell RNA-sequencing data on colorectal cancer (Section \ref{sec:real_data}). We conclude with closing remarks in Section \ref{sec:conclusion}. Technical proofs of the main theorems and lemmas are provided in the Appendix.

\section{Pairwise likelihood truncation via $L_1$-penalization} \label{sec:method}

\subsection{An efficiency criterion for pairwise likelihood truncation} \label{sec:criterion}

In this section, we introduce an empirical criterion for selecting the relevant pairwise contributions. We adopt the PL estimation framework due to a crucial property: the PL function attains its maximum at the same parameter values as the full likelihood function for certain covariance models within the exponential family, such as the Gaussian distribution; see Mardia \citep{mardia2009maximum}. Hence, the resulting PL estimator achieves full efficiency, at least when the sparsity structure is known in advance.  Leveraging the PL framework enables us to simplify the complexity of the full likelihood while retaining a substantial amount of statistical information. Next, we formulate our statistical criterion with the multivariate Gaussian distribution in mind, which is common in multivariate analysis, but the criterion remains applicable in other scenarios as well.

Let  $\bTheta_{jk}=\cov\{ (X_j, X_k)^\top\}$ be the $2\times 2$ sub-matrix of $\bTheta$ with elements corresponding to the $j$th and $k$th columns and rows. From one vector $\bX$ we form first- and second-order log-likelihoods contributions
\begin{align}\label{eq:ellij}
\ell_{jj}(\btheta; \bX)  &=   - \log (\theta_{jj}) - \dfrac{X^2_{j}}{\theta_{jj}},  \ \  j = 1,\dots, p,  \notag \\
\ell_{jk}(\btheta; \bX) &  = - \log (\theta_{jj} \theta_{kk} - \theta_{jk}^2) - \dfrac{\theta_{kk} X^2_{j} - 2 \theta_{jk} X_{j}X_k + \theta_{jj} X^2_{k}}{\theta_{jj} \theta_{kk} - \theta_{jk}^2}, \ \  j < k,
\end{align}
based on univariate and bivariate normal distributions, respectively. Marginal and bivariate scores are obtained, as usual, by differentiation. Particularly, marginal scores are represented by the $m \times 1$ vector defined by
\begin{equation}\label{eq:score_mar}
\bu_{jj}(\btheta; \bX) = \dfrac{\partial}{\partial \btheta}\ell_{jj}(\btheta; \bX) = \left( 0,\dots,0, u_{jj} ,0,\dots,0 \right)^\top,
\end{equation}
where $u_{jj} = (X^2_{j}- \theta_{jj})/\theta^2_{jj}$ is the only non-zero element in position $(2p+2-j)(j-1)/2+1$. Bivariate scores are represented by the $m \times 1$ vector
\begin{equation}\label{eq:score_mar2}
\bu_{jk}(\btheta; \bX) = \dfrac{\partial}{\partial \btheta}\ell_{jk}(\btheta; \bX) = \left( 0,\dots,0, u^{[1]}_{jk},0,\dots,0, u^{[3]}_{jk},0,\dots,0, u^{[2]}_{jk},0,\dots,0 \right)^\top,
\end{equation}
for $j<k$, with  three nonzero elements
\begin{align}\label{eq:u_{jk}}
	 u_{jk}^{[1]} & = -\frac{ \theta_{jj} \theta_{kk}^2 - \theta_{kk}\theta_{jk}^2 - X^2_{k}\theta_{jk}^2
- X^2_{j}\theta_{kk}^2+2X_{j}X_k \theta_{jk}\theta_{kk}}{ 2(\theta_{jj}\theta_{kk}-\theta_{jk}^2)^2}, \notag \\
	 u_{jk}^{[2]} & = -\frac{ \theta_{kk} \theta_{jj}^2 - \theta_{jj}\theta_{jk}^2 - X_{j}^2\theta_{jk}^2
- X_{k}^2\theta_{jj}^2+2X_{j}X_k \theta_{jk}\theta_{jj}}{ 2(\theta_{jj}\theta_{kk}-\theta_{jk}^2)^2}, \notag \\
	 u_{jk}^{[3]} & = -\frac{ \theta_{jk} ^3 -\theta_{jk}\theta_{jj} \theta_{kk} + X_{j}^2\theta_{jk} \theta_{kk} +X_{k}^2\theta_{jk} \theta_{jj} -X_{j}X_k ( \theta_{jj}\theta_{kk}+ \theta_{jk} ^2)  }
{ (\theta_{jj}\theta_{kk}-\theta_{jk}^2)^2},
\end{align}
in positions $(2p+2-j)(j-1)/2+1$, $(2p+2-k)(k-1)/2+1$ and $(2p+2-j)(j-1)/2+k+1-j$, respectively.

We define the overall PL score function  by taking the linear combination of marginal and bivariate score contributions as follows:
\begin{equation}\label{eq:score_pw}
\bu(\btheta, \bw; \bX) = \bbU(\btheta; \bX) \bw = \sum_{j\le k} w_{jk} \bu_{jk}(\btheta; \bX),
\end{equation}
where $\bw \in \mathbb{R}^m$ is a vector of linear coefficients with elements $w_{jk}$ associated to the $jk$th score vector $\bu_{jk}$, and $\bbU(\btheta, \bX)$ is the $m \times m$ matrix with columns given by the score vectors $\{ \bu_{jk}, j\le k \}$. Equation (\ref{eq:score_pw}) coincides with the PL function of  Cox and Reid \cite{cox2004note} when $w_{jk} = 1$, for all $j< k$, and $w_{jk}=-ap$, $a \in \mathbb{R}$, for all $j=k$. In other cases, it offers a more flexible generalization, allowing heterogeneous contributions from specific pairwise likelihoods.
Note that the selection of the relevant covariance structure can be achieved by setting certain bivariate coefficients to zero. In particular,  when $X_j$ and $X_k$ have zero correlation, one should have  $w_{jk}=0$, since the corresponding score $\bu_{jk}$ does not contribute meaningful information. From this viewpoint, the model selection problem of determining the nonzero covariances $\theta_{jk}$ can be recast in terms of selecting the most informative pairwise score contributions among $p(p-1)/2$ possible contributions.

To select the pairwise score terms, we propose to select the optimal coefficients $\bw(\btheta)$ by minimizing the theoretical criterion
\begin{align}\label{eq:crit_true}
	 \underbrace{\dfrac{1}{2}\E \left\| \bu^{ML}(\btheta; \bX) - \bu(\btheta, \bw; \bX) \right\|_2^2}_\text{Statistical efficiency}  + \underbrace{\dfrac{\lambda}{n} \sum_{j<k} \frac{|w_{jk}|}{\theta_{jk}^2}}_\text{Sparsity},
\end{align}
with respect to $\bw$, where $\bu^{ML}(\btheta; \bX) = \partial \log f(\bX; \btheta) / \partial \btheta$ is the ML score and $f(\cdot; \btheta)$ is the $p$-variate density of $\bX$, which may be unknown.  Minimizing (\ref{eq:crit_true}) is interpreted as decreasing the distance between ML and PL scores, effectively maximizing statistical efficiency for a given level of sparsity. Particularly,  the weighted $L_1$-penalty  $\lambda n^{-1}\sum_{j<k} |w_{jk}|\theta^{-2}_{jk}$ discourages the inclusion of too many pairwise contributions, due to its geometric properties and several elements of $\bw$ become zero when $\lambda$ is sufficiently large. This penalty is adaptive in the sense  that for $\theta_{jk}$ closer to zero, the $jk$th pairwise score receives a larger penalty. Differently from the adaptive Lasso penalty of Zou \cite{zou2006adaptive} for sparse regression, here the penalty focuses on coefficients $w_{jk}$ associated with entire score terms, while the quantity $\theta^{-2}_{jk}$ depends on the covariance parameters but not on $\bw$, thus playing the role of a fixed weight for the $jk$th term.

Straightforward algebra shows that (\ref{eq:crit_true}) can be re-written  as convex criterion not depending on the maximum likelihood score $\bu^{ML}$; hence a  solution can be efficiently obtained using available algorithms. Particularly, ignoring an irrelevant term not depending on $\bw$, we  expand (\ref{eq:crit_true}) and define the objective function:
\begin{align}\label{eq:crit_true1_2}
d_\lambda(\bw; \btheta) =   \dfrac{1}{2}\E \left\|  \bu(\btheta, \bw; \bX) \right\|_2^2- \E[ \bu^{ML}(\btheta; \bX)^\top\bu(\btheta, \bw; \bX)] +  \dfrac{\lambda}{n} \sum_{j<k} \frac{|w_{jk}|}{\theta_{jk}^2}.
\end{align}
The above criterion can be rewritten in a more compact fashion as
\begin{align}\label{eq:crit_true1_3}
d_\lambda(\bw; \btheta) & =   \dfrac{1}{2} \bw^\top \bJ(\btheta) \bw - \bw^\top \text{diag}\{ \bJ(\btheta)\} +  \dfrac{\lambda}{n} \sum_{j<k} \frac{|w_{jk}|}{\theta_{jk}^2},
\end{align}
where $\bJ(\btheta)$ is the $m \times m$ scores covariance matrix
\begin{align}
\bJ(\btheta) &= \E\left[ \bbU(\btheta; \bX)^\top  \bbU(\btheta; \bX)\right],
\end{align}
and $\text{diag}\{ \bJ(\btheta) \}$ denotes the $m \times 1$ vector containing the diagonal elements of $\bJ(\btheta)$. Note that dependence on the ML score in the second term of (\ref{eq:crit_true1_2}) can be dropped thanks to the second Bartlett's equality which, assuming unbiased scores, implies that the diagonal elements of $\bJ(\btheta)$ are
$- \E[\bu^{ML}(\btheta; \bX)^\top  \bu_{jk}(\btheta; \bX) ] = \E \left\| \bu_{jk}(\btheta; \bX) \right\|^2_2$, for $j\le k$.

\subsection{Covariance tresholding by pairwise likelihood truncation} \label{sec:cov_thresholding} In this section, we describe our approach to obtain a sparse estimate of $\btheta$. Given an i.i.d. sample $\bX^{(1)}, \dots, \bX^{(n)}$ from the distribution of $\bX$, we solve the estimating equations $\sum_{i=1}^n \bu(\btheta, \bw; \bX^{(i)}) = \bf0$. It is easy to check that the solution corresponds to  $\theta_{jj} = S_{jj}$,  $1 \le j\le p$, and $\theta_{jk}=S_{jk}$ for those pairs $jk$ such that $w_{jk}\neq 0$. This solution is invariant with respect to the elements of $\bw$ different from zero.  Particularly, the marginal estimating equation
\begin{equation}\label{eq:est_eq}
\sum_{i=1}^n u_{jj}(\btheta;\bX^{(i)}) = \sum_{i=1}^n w_{jj}\left( \dfrac{( X_j^{(i)})^2 - \theta_{jj}}{\theta^2_{jj}} \right)= 0,
\end{equation}
is solved by $\theta_{jj}= S_{jj} = n^{-1} \sum_{i=1}^n (X_j^{(i)})^2 $, for all $1 \le j \le p$. Substituting $\theta_{jj}=S_{jj}$ and $\theta_{kk} = S_{kk}$ into $\sum_{i=1}^n \bu_{jk}(\btheta; \bX^{(i)})$ gives
\begin{align*}
	  \sum_{i=1}^nu_{jk}^{[1]} = -   \frac{   \sum_{i=1}^n X_j^{(i)}X_k^{(i)}- n\theta_{jk}   }{ {S_{kk}}^{-1} \theta_{jk}^{-1} (S_{jj} S_{kk}-\theta_{jk}^2)^2}, \quad
 \sum_{i=1}^n u_{jk}^{[2]} =  - \frac{      \sum_{i=1}^n X_j^{(i)}X_k^{(i)}- n \theta_{jk}  }{ {S_{jj}}^{-1}  \theta_{jk}^{-1} (S_{jj} S_{kk}-\theta_{jk}^2)^2},
\end{align*}
 \begin{align*}
	 \sum_{i=1}^n  u_{jk}^{[3]} = -  \frac{    \sum_{i=1}^n  X_j^{(i)}X_k^{(i)}  - n\theta_{jk}    }
{ ( \theta_{jk}^2 + S_{jj} S_{kk}  )^{-1 }(S_{jj} S_{kk}-\theta_{jk}^2)^2},
\end{align*}
where $u_{jk}^{[r]}$, $r=1,2,3$, are defined in (\ref{eq:u_{jk}}). It is easy to see that the estimating equations are solved by  $\theta_{jk}= S_{jk} = n^{-1} \sum_{i=1}^n X_jX_k /n$ as long as $w_{jk}\neq 0$. On the other hand, if $w_{jk}=0$, we set $\theta_{jk}=0$, since the corresponding score contribution for estimating $\theta_{jk}$ is exacly zero.

The TPL estimator $\hat\btheta$ of $\btheta$ is then obtained by applying the following thresholding mechanism to the entries of $\bS$:
\begin{equation}\label{eq:thetahat1}
	\hat \theta_{jk} =
	\left\{
	\begin{array}{ccc}
		S_{jk}, && \text{if }    \hat w_{jk} \neq 0, \\
		0, & &\text{if }    \hat w_{jk} = 0,
	\end{array}
		\right. \quad j,k =1,\dots, p\,,
\end{equation}
where $\hat \bw =(\hat w_{11}, \hat w_{12}, \dots, \hat w_{p-1p}, w_{pp})^\top$ is found by minimizing the empirical objective
\begin{align}\label{eq:crit_true2}
\hat d_\lambda(\bw) & =   \dfrac{1}{2} \bw^\top \hat \bJ \bw + \bw^\top \text{diag}\{ \hat \bJ\} +  \dfrac{\lambda}{n} \sum_{j<k} \frac{|w_{jk}|}{S_{jk}^2},
\end{align}
for some $\lambda \ge 0$. Here $\hat \bJ$ is the plug-in estimate of the $m\times m$ scores covariance matrix
\begin{align} \label{eq:hatJ}
\hat \bJ &= \dfrac{1}{n} \sum_{i=1}^n \bbU(\bs; \bX^{(i)})^\top  \bbU(\bs; \bX^{(i)}),
\end{align}
where $\bs = \text{vech}(\bS)$ and $\bU(\cdot; \bX)$ is the matrix containing all marginal and pairwise scores defined in (\ref{eq:score_pw}). In the rest of the paper, we use the notation $\hat \bTheta$ to denote the TPL estimator of $\bTheta$, such that $\hat \btheta = \text{vech}(\hat \bTheta)$.

Equation (\ref{eq:crit_true2}) represents a plug-in estimate of the population criterion $d_\lambda(\bw; \btheta)$  in Equation \eqref{eq:crit_true1_2}, obtained by setting $\btheta = \bs$ and with expectations replaced by averages. The threshold estimator in Equation (\ref{eq:thetahat1}) describes a simple mechanism whereby sparsity on $\btheta$ is induced by the relative information contained in the pairwise likelihood scores. Since $\bw$ is not, strictly speaking, a statistical parameter, but rather defines the structure of the estimating equations, selection of $\bw$ should require special considerations. Notably,  our objective is  different from that in classic regularization methods which obtain sparsity of $\btheta$ by maximizing the full or pairwise likelihood functions subject to a regularization constraint, such as the Lasso constraint $\sum_{j<k} |\theta_{jk}| < u$, $u>0$, on the parameter space.
 
In the remainder of this section, we derive an explicit expression for the minimizer of the empirical criterion $\hat d_\lambda(\bw)$ and provide sufficient conditions for its uniqueness. For a vector $\ba \in \mathbb{R}^m$, we use $\ba_{\mathcal{E}}$ to denote the sub-vector corresponding to index $\mathcal{E} \subseteq \{  jk : 1\leq j\leq k \leq p\}$, while $\bM_{\mathcal{E}}$
denotes the sub-matrix of the $m \times m$ squared matrix $\bM$ formed by taking rows and columns corresponding to $\mathcal{E}$.
 
\begin{theorem} \label{theory:uni}  The minimizer $\hat \bw$ of criterion $\hat d_\lambda(\bw)$ defined in (\ref{eq:crit_true2}) is unique with probability one, and given by
	\begin{equation}  \label{prop:uniqueness} 
	\hat \bw_{  \hatEps}  =  \hat \bJ_{\hatEps}  ^{-1} 
	\left[\mathrm{diag}\{\hat \bJ_{\hatEps}\} - \dfrac{\lambda}{n} \bEta \right],  \  \  \hat \bw_{   \setminus \hatEps}  = 0,
	\end{equation}
where $\bEta$ is the sub-gradient of the weighted $L_1$-norm $\sum_{j<k}   |w_{jk}|/S^2_{jk} $, that is 
\begin{align} \label{eq:eta_jk}
	\eta_{jk} =  \frac{1}{ S^{2}_{jk}}  \times 
	\left\{
	\begin{array}{ccl}
	    0, & &\text{if } j =  k, \\
		1, & &\text{if } j \neq k, w_{jk} >0, \\
		-1, & &\text{if }j \neq k,  w_{jk} <0, \\
		\in [ - 1, 1 ], & &\text{if } j \neq k,  w_{jk} = 0,
	\end{array}
	\right.
\end{align}
and 
\begin{equation}\label{eq:epsilonhat}
\hatEps = \left\{ jk   : \left\vert \mathrm{diag}\{\hat \bJ\}_{jk} - \hat \bJ_{jk, \cdot}\hat \bw    \right\vert \geq   \dfrac{\lambda}{n S^{2}_{jk}}\mathbb I(j \neq k)\right\},
\end{equation}
where $\setminus \hatEps$ denotes the complement index set $\{  jk : 1\leq j\leq k \leq p\} \setminus \hatEps$, $\mathbb I(\cdot)$ is the indicator function, $\hat \bJ_{jk, \cdot}$ is the row of $\hat \bJ$ corresponding to index $jk$ and $\mathrm{diag}\{\hat \bJ\}_{jk}$ is the element of $\mathrm{diag}\{\hat \bJ\}$ corresponding to index $jk$.
\end{theorem}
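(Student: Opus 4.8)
The plan is to treat the empirical objective \eqref{eq:crit_true2} as an unconstrained convex program and characterise its minimiser through first-order (subgradient) optimality, exactly as for a weighted Lasso, while obtaining uniqueness from \emph{strict} convexity rather than from a genericity argument on an under-determined design. The whole argument splits into establishing positive definiteness of $\hat \bJ$, writing the stationarity condition, and solving the active-set linear system.

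First I would establish that $\hat \bJ \succ 0$ with probability one, which is the crux of the uniqueness claim and exploits the special loading pattern of the score vectors in \eqref{eq:score_mar}--\eqref{eq:score_mar2}. Writing $\bv^\top\hat\bJ\bv = n^{-1}\sum_{i=1}^n\|\sum_{j\le k} v_{jk}\,\bu_{jk}(\bs;\bX^{(i)})\|_2^2$, this form vanishes only if $\sum_{j\le k} v_{jk}\,\bu_{jk}(\bs;\bX^{(i)}) = \mathbf{0}$ for every $i$. The key observation is that the off-diagonal coordinate indexed by a pair $lm$ ($l<m$) receives a nonzero contribution only from $\bu_{lm}$, through the entry $u^{[3]}_{lm}$, since marginal scores load solely on diagonal positions and every other bivariate score places its $u^{[3]}$ entry at a distinct off-diagonal position. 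Reading off that coordinate therefore forces $v_{lm}=0$ as soon as $u^{[3]}_{lm}(\bs;\bX^{(i)})\neq 0$ for some $i$; the remaining diagonal coordinates then force $v_{jj}=0$ whenever some $u_{jj}(\bs;\bX^{(i)})\neq 0$. Because the data admit a density, the exceptional configurations where these quantities vanish for all $i$ form a Lebesgue-null set, so $\bv=\mathbf{0}$ and $\hat\bJ\succ0$ almost surely. Consequently $\hat d_\lambda$ is strictly convex and coercive (the positive-definite quadratic dominates the linear term while the penalty is nonnegative), so a minimiser exists and is unique with probability one, and every principal submatrix $\hat\bJ_{\hatEps}$ is invertible, which legitimises the closed form.

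Next I would write the optimality condition. Since $\hat d_\lambda$ is convex with a differentiable quadratic-plus-linear part and a separable convex penalty, $\hat\bw$ is the minimiser if and only if $\mathbf{0}$ lies in its subdifferential; mirroring the population criterion \eqref{eq:crit_true1_3}, this reads $\hat\bJ\hat\bw - \mathrm{diag}\{\hat\bJ\} + (\lambda/n)\bEta = \mathbf{0}$ for some $\bEta$ with components as in \eqref{eq:eta_jk}. Taking this row by row gives, for each pair, $\mathrm{diag}\{\hat\bJ\}_{jk} - \hat\bJ_{jk,\cdot}\hat\bw = (\lambda/n)\eta_{jk}$, where off the diagonal $|\eta_{jk}|\le 1/S^2_{jk}$ with $|\eta_{jk}| = 1/S^2_{jk}$ precisely when $\hat w_{jk}\neq 0$. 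This is exactly the interior/boundary dichotomy encoded by the threshold set $\hatEps$ in \eqref{eq:epsilonhat}: for $jk\notin\hatEps$ the strict inequality forces $\eta_{jk}$ into the open interval, so $\hat w_{jk}=0$, giving $\hat\bw_{\setminus\hatEps}=\mathbf{0}$.

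Finally I would restrict the stationarity system to the rows indexed by $\hatEps$. Using $\hat\bw_{\setminus\hatEps}=\mathbf{0}$, the cross terms drop out and the $\hatEps$-block reduces to $\hat\bJ_{\hatEps}\hat\bw_{\hatEps} = \mathrm{diag}\{\hat\bJ_{\hatEps}\} - (\lambda/n)\bEta$; inverting $\hat\bJ_{\hatEps}$, nonsingular by the first step, yields \eqref{prop:uniqueness}. I expect the main obstacle to be that first step: verifying that the combinatorial loading pattern of the marginal and bivariate scores makes $\hat\bJ$ strictly positive definite almost surely, and being careful that $\bs$ is itself a function of the whole sample when arguing that the relevant vanishing events are null. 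Once strict convexity is in hand, the subgradient bookkeeping in the remaining steps is routine.
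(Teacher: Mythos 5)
Your proposal is correct and follows essentially the same route as the paper's proof: establish that $\hat{\bJ}$ (and hence $\hat{\bJ}_{\hatEps}$) is positive definite almost surely via the loading pattern of the marginal and bivariate scores, deduce uniqueness from strict convexity of the quadratic part plus convexity of the penalty, and then read the closed form off the KKT/subgradient stationarity conditions restricted to the active set. Your quadratic-form argument for positive definiteness is just a more explicit rendering of the paper's terser step (nonvanishing scores $\Rightarrow$ linear independence $\Rightarrow$ full rank of the Gram matrix $\hat{\bJ}$), so the two proofs coincide in substance.
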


Positive definiteness of $\bTheta$ implies linear independence of elements in $\bX$, ensuring that the score elements defined in (\ref{eq:u_{jk}}) are nonzero with probability one. Thus by definition, $\hat \bJ_{\hatEps}$ is positive definite almost surely, leading to a unique solution of $\hat \bw$; in contrast, in Lasso regression the columns of the design matrix are required to be in general position to yield a similar result.

From Theorem \ref{theory:uni} it follows that, for sufficiently large $\lambda >0$, a relatively small subset of scores is selected. Formula (\ref{eq:epsilonhat}) suggests that the $jk$th score is selected when it contributes relatively large information  in the overall pairwise likelihood relative to the size of $S_{jk}$. The extent of the contribution is measured by the difference between the marginal estimated Fisher information  for the $jk$th score, $\mathrm{diag}(\hat \bJ)_{jk}$,  and the information already present in other selected scores, represented by the linear combination $\hat \bJ_{jk, \cdot}\hat \bw$.

\subsection{Relationship with adaptive thresholding} \label{sec:first_order}
Adaptive thresholding sets a covariance estimate to zero if the absolute value of the $z$-score $S_{jk}/ \text{SE}_{jk}$ is sufficiently small. In our setting, $\text{SE}_{jk} =  \{ n \widehat{\text{var}}(\bu_{jk}) \}^{-1/2}$,  where
$
\widehat{\text{var}}(\bu_{jk}) = n^{-1} \sum_{i=1}^n \vert\vert \bu _{jk}(\bs; \bX^{(i)}) \vert\vert^2_2
$ 
is the empirical Fisher information based on the the $jk$th pair.  While this approach is widely used for its extreme simplicity in large multivariate problems, $\text{SE}_{jk}$ clearly ignores the information already contributed by other covariance  estimates $S_{rs}$, $rs\neq jk$. An inspection of the first-order conditions associated with the objective (\ref{eq:crit_true2}) reveals that the sparse estimator $\hat \btheta$ is a type of adaptive thresholding estimator, with standard errors adjusted sequentially based on the information provided by other non-zero covariance estimates.

Inspecting the non-zero elements $\hat \bw_{  \hatEps}$ in the solution (\ref{prop:uniqueness}) provides further insights on the selection process. Consider first the case where $\hat w_{jk} \neq 0$; for this to be true we must have that
\begin{equation} \label{eq:KKTj}
	\sum_{i=1}^n \bu_{jk}(\bs; \bX^{(i)})^\top {\text{res}}^{(i)}_{jk} =      \lambda    \eta_{jk},        
\end{equation}
where ${\text{res}}^{(i)}_{jk}$ is the pseudo-residual defined by
\begin{equation*}
	{\text{res}}^{(i)}_{jk}  = \bu_{jk}(\bs; \bX^{(i)})  -  \sum_{st\neq jk} \bu_{st}(\bs; \bX^{(i)}) \hat w_{st}. 
\end{equation*}
Taking the absolute value on both sides of (\ref{eq:KKTj}) shows that a sufficient condition for $\hat w_{jk} \neq 0$ for some $j<k$ is
\begin{equation*} %\label{eq:KKTj2}
	S^2_{jk}  \times \left\vert   \sum_{i=1}^n \bu_{jk}(\bs; \bX^{(i)})^\top \text{res}^{(i)}_{jk} \right\vert \ge  \lambda \,,
\end{equation*}
and $\hat w_{jk} = 0$  otherwise. From the above expression, we can see that the final estimator $\hat \btheta$ follows the   adaptive thresholding mechanism
\begin{equation}\label{eq:thetahat}
	\hat \theta_{jk} =
	\left\{
	\begin{array}{ccc}
		S_{jk}& & \text{if }     |S_{jk}|/\text{SE}^{\text{adj}}_{jk} \ge \sqrt{\lambda}, \\
		0& & \text{if }    |S_{jk}|/\text{SE}^{\text{adj}}_{jk} < \sqrt{\lambda},
	\end{array}
	\right.
\end{equation}
where  $\text{SE}^{\text{adj}}_{jk}$ is the adjusted standard error
\begin{equation}\label{eq:adj_error}
	\text{SE}^{\text{adj}}_{jk} = n^{-1/2}\left\{ \widehat{\text{var}}(\bu_{jk}) - \widehat{\text{cov}}(\bu_{jk}, \hat \bw^\top \bu ) \right\}^{-1/2} \,,
\end{equation}
involving the quantities
\begin{align*}
	\widehat{\text{var}}(\bu_{jk})  &=  \dfrac{1}{n}\sum_{i=1}^n \left\Vert \bu_{jk}(\bs; \bX^{(i)}) \right\Vert^2_2,  \\  
	\widehat{\text{cov}}(\bu_{jk}, \hat \bw^\top \bu  ) &=   \dfrac{1}{n}  \sum_{i=1}^n \bu_{jk}(\bs; \bX^{(i)})^\top  \sum_{st \neq jk}  \hat \bw_{st} \bu_{st}(\bs; \bX^{(i)}) \,.
\end{align*}
with the last expression representing the covariance between the $jk$th score and the pairwise likelihood based on scores different from  $\bu_{jk}$.

The adjusted standard error expression in (\ref{eq:adj_error}) emphasizes an important difference between adaptive thresholding method and our approach.   Unlike the usual standard error formula $\text{SE}_{jk} =  \{ n \widehat{\text{var}}(\bu_{jk}) \}^{-1/2}$, the adjusted standard error $\text{SE}^{\text{adj}}_{jk}$  is computed by removing the portion of variance  already explained by the linear combination of currently selected scores.

\subsection{Selection of $\lambda$}\label{sec:lambdaSel}

The tuning parameter $\lambda$ plays an important role in determining the proportion of nonzero elements in $\hat \bTheta$. The inequality in (\ref{eq:thetahat}) resembles a rejection region of a hypothesis test and  suggests that $\lambda$ may be chosen based on some form of error control. In this section, we describe a data-driven criterion to select $\lambda$  based on a sequential testing  of the null hypotheses of the form  $H_0$: $\theta_{jk}=0$ against the alternative $H_1$: $\theta_{jk}\neq 0$, for all $jk \in \hatEps$.

Under the null hypothesis that $\theta_{jk} = 0$, $\sqrt{n} S_{jk}$ converges in distribution to the univariate normal distribution $N(\theta_{jk}, v_{jk})$ where   $v_{jk} = \theta_{jk}^2 + \theta_{jj}\theta_{kk}$, under typical regularity conditions. Thus, $n S^2_{jk}/v_{jk}$ converges in distribution to $\chi^2_1$, a chi-square random variable with one degree of freedom. Since $S_{jk} \overset{p}{\to} \theta_{jk}$, we have
\begin{align}\label{nulldis}
    \dfrac{n S^2_{jk}}{S_{jk}^2 + S_{jj}S_{kk}} \overset{d}{\rightarrow} \chi^2_1,
\end{align}
by Slutsky's Theorem. The alternative hypothesis $H_1: \theta_{jk} \neq 0$ is rejected if $n S^2_{jk}/(S_{jk}^2 + S_{jj}S_{kk}) > q_\alpha$, where $q_\alpha$ refers to the $(1-\alpha)$-quantile of the chi-square distribution with one degree of freedom and $0<\alpha<1$ is user-specified.  This leads to the following rule:
\begin{align} \label{eq:sel_lambda}
\hat \lambda  = \inf\left\{ \lambda:  \dfrac{n S^2_{jk}}{S_{jk}^2 + S_{jj}S_{kk}}> \gamma, \text{ for all } jk \in \hat \cE  \right\}, 
\end{align}
where $\gamma = q_\alpha$ for some user specified significance level $\alpha$.

By the nature of the constrained quadratic optimization problem defined in (\ref{eq:crit_true2}),  more $w_{jk}$ are allowed to be nonzero as $\lambda$ decreases. Criterion (\ref{eq:sel_lambda}) can be interpreted as choosing the smallest $\lambda$ to allow as many as possible nonzero elements in $\hat\btheta$, while ensuring that they all test significantly differently from 0. Differently from  classical $L_1$-penalized estimation where $\lambda$ is a tuning parameter, note that here  $\lambda$ is a random variable once $\gamma$ is determined.

\subsection{Cooordinate descent computing}

In the following, we outline  a coordinate descent algorithm to minimize the convex objective function in \eqref{eq:crit_true2} and obtain $\hat\bw$.  The algorithm cycles through the elements $\bw = (w_{11}, w_{12}, \dots, w_{p-1p}, w_{pp})$, updating one element at a time until convergence. Let $[t]$ denote the  $t$th iteration of the algorithm; given the current estimate of the coefficients $\hat\bw^{[t]} = (\hat w^{[t]}_{11}, \hat w^{[t]}_{12}, \dots, \hat w^{[t]}_{p-1p}, w^{[t]}_{pp})$, we compute
\begin{equation}\label{eq:coordDesc}
\hat{w}^{[t+1]}_{jk} = 
\left\{
\begin{array}{ccc}
 \frac{1}{\hat\bJ_{jk,jk}}\left(\hat\bJ_{jk,jk} - \sum_{rs \neq jk} \hat{w}^{[t]}_{rs} \hat\bJ_{jk,rs}\right), & &\text{if } j = k \\
 \frac{1}{\hat\bJ_{jk,jk}}\mathcal{S}\left(\hat\bJ_{jk,jk} - \sum_{rs \neq jk} \hat{w}^{[t]}_{rs} \hat\bJ_{jk,rs} ; \frac{\lambda}{S^2_{jk}} \right), & & \text{if } j \neq k
\end{array}
\right.
\end{equation}
while keeping all the other elements $\hat{w}^{(t)}_{rs}$, $rs \neq jk$ fixed, until a stopping criterion is met. Here $\mathcal{S}(x; \lambda)$ denotes the soft-thresholding operator  $\mathcal{S}(x; \lambda) = \text{sign}(x)(\vert x\vert - \lambda)_+$, and $x_+ = \text{max}\{0, x\}$, with $\text{sign}(x)$ being the sign function taking values $-1$, $0$ and $1$ if $x <0$, $x = 0$ and $x > 0$ respectively. Operationally, the selection criterion for $\lambda$ described by Equation (\ref{eq:sel_lambda}) is implemented through a golden search rule, a univariate numerical optimization method effective in finding a minimum for unimodal functions on a specific interval. Through iterative steps, the procedure progressively narrows down the search interval for $\lambda$ until convergence. Our numerical experiments have shown  that this approach is more precise and computationally efficient compared to costly top-down grid searches.

\section{Properties}\label{sec:prop}

In this section, we investigate the asymptotic properties of the selection procedure and the behavior of the implied estimator $\hat \btheta$ defined in (\ref{eq:thetahat1}). We simplify notations by letting $\bJ=\bJ(\btheta)$ be the covariance of scores at the true parameter value, $\bh = \text{diag}(\bJ)$ and $\hat \bh = \text{diag}(\hat \bJ)$. We use $\| A \|_1$, $\| A \|_\infty$ and $\| B \|_{max}$ to denote the $L_1$ norm, infinity norm and the max norm of vector $A$ and matrix $B$.
Define $\xi =\{ jk: \theta_{jk}\neq 0 \text{ and } j<k\}\cup\{jj: \theta_{jl}\neq 0 \text{ for some } l\neq j\}$ as the index set for the nonzero covariances and corresponding margins. Let $\hatxi =\{ jk: \hat\theta_{jk}\neq 0 \text{ and } j<k\}\cup\{jj: \hat\theta_{jl}\neq 0 \text{ for some } l\neq j\}$ be the estimate of $\xi$.

Let $m_0 = |\Eps\setminus \{jj, j=1\ldots,p\}|$ be the number of nonzero bivariate covariances in $\btheta$. We consider two main scenarios: the first where the data dimension $p$ is fixed, but no particular distributional assumption on $\bX$ is made; the second where $p$ is allowed to grow faster than the sample size $n$, but we require restrictions on the tail behavior of the data generating process. In the latter setting, we also allow $m_0$ to grow with $p$. For the results presented in this section, we require the following regularity conditions.
\begin{enumerate}[label=\textbf{C\arabic*},ref=C\arabic*]
\item[\bf C1]  The  elements of the covariance matrix $\bTheta$ are uniformly bounded. Moreover, the nonzero elements of $\bTheta$ are bounded away from 0.
\item[\bf C2]  The elements of $\bJ$, $\bJ_\xi^{-1}$ and $\bJ_\xi^{-1}\bh_\xi$ are uniformly bounded.
\item[\bf C3]   There exists some constant $c$ for all $m_0\geq 1$, such that $\min_{x}\|\bJ_{\xi,\xi_{-l}}x - \bh_\xi \|_1 > c$ for any $l=1,\ldots,\vert \xi \vert$, where x represents some length $\vert \xi \vert -1$ vector and $\bJ_{\xi,\xi_{-l}}$ is a $\vert \xi \vert\times (\vert \xi \vert-1)$ submatrix of $\bJ_\xi$ obtained by removing its $l$th column.
\item[\bf C4]   There exist some rate $r>0$ and some constant $c>0$, such that $\Pr(|X_j|>t)< c \cdot \exp(-t^r)$ for all $t\ge 0$ and $j=1,\ldots,p$.
\end{enumerate}

Condition C1 prescribes a regular behavior of the target covariance matrix  $\bTheta$ needed when $p$ diverges, while Condition C2 guarantees the existence of   $\bJ_\xi^{-1}\bh_\xi$,  i.e., the minimizer of the ideal population objective $d_\lambda(\bw; \btheta)$, at $\lambda = 0$, limited to those entries of $\bw$ corresponding to zero entries in $\bTheta$. Condition C3 is a main requirement in all our results and states that the columns of $\bJ_\xi$ are linearly independent and any $\vert \xi \vert-1$ columns of $\bJ_\xi$ are linearly independent of $\bh_\xi$, that is any linear combination of these columns cannot represent $\bh_\xi$ up to arbitrary precision. For the high-dimensional setting, we also use Condition C4 which assumes an exponential tail behavior for the entries of the random vector $\bX$. This includes variables with tails much heavier than sub-Gaussian tails.

In the next theorem, we begin to study the selection properties of our procedure in the simpler setting where $p$ and $n \rightarrow \infty$. In this case, only Conditions C1-C3 are required to ensure both selection consistency and entry-wise convergence in probability of $\hat \bTheta$ to $\bTheta$.
 
\begin{theorem}[Selection consistency for fixed $p$] \label{theory:fixedp}
    Under Conditions C1, C2 and C3, for fixed $p$ and $m_0$, there exists some constant $c$, such that when $\gamma=cn$, we have
    $
        \Pr(\hatEps = \Eps) \to 1
    $, as $n \to \infty$.
\end{theorem}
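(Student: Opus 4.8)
The plan is to exploit that, for fixed $p$ and $m_0$, there are only finitely many candidate pairs, so it suffices to show that each off-diagonal pair is classified correctly with probability tending to one and then apply a union bound; the diagonal indices require no argument, since $w_{jj}$ is unpenalized ($\eta_{jj}=0$) and, by the characterization (\ref{eq:epsilonhat}), every $jj$ lies in $\hatEps$ almost surely, while every $jj$ lies in $\Eps$ because the variances are strictly positive. The whole argument then hinges on the behavior of the statistic $T_{jk}=nS_{jk}^2/(S_{jk}^2+S_{jj}S_{kk})$ driving the selection rule (\ref{eq:sel_lambda}). By the law of large numbers $S_{jk}\overset{p}{\to}\theta_{jk}$ and $S_{jj}S_{kk}\overset{p}{\to}\theta_{jj}\theta_{kk}$, the requisite moments being finite because Condition C2 bounds the entries of $\bJ$. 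Hence for a true signal ($j<k$, $\theta_{jk}\neq 0$) we get $T_{jk}/n\overset{p}{\to}\rho^\ast_{jk}:=\theta_{jk}^2/(\theta_{jk}^2+\theta_{jj}\theta_{kk})$, whereas for a true null ($j<k$, $\theta_{jk}=0$) Equation (\ref{nulldis}) gives $T_{jk}\overset{d}{\to}\chi^2_1$, so that $T_{jk}=O_p(1)$.

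First I would establish the separation of these two regimes. Condition C1 forces the nonzero $\theta_{jk}$ to stay bounded away from zero while all entries of $\bTheta$ remain bounded, so $\min_{j<k,\,\theta_{jk}\neq 0}\rho^\ast_{jk}\ge\rho_\ast$ for some constant $\rho_\ast\in(0,1)$ not depending on $n$. Choosing any fixed $c$ with $0<c<\rho_\ast$ and setting $\gamma=cn$, I would then show that each true signal satisfies $\Pr(T_{jk}>cn)=\Pr(T_{jk}/n>c)\to 1$, since $T_{jk}/n\overset{p}{\to}\rho^\ast_{jk}>c$, whereas each true null satisfies $\Pr(T_{jk}>cn)\to 0$, since $T_{jk}=O_p(1)$ while $cn\to\infty$. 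A union bound over the $O(p^2)$ pairs, legitimate because $p$ and $m_0$ are fixed, then yields that with probability tending to one the set of off-diagonal pairs exceeding the threshold $\gamma$ is exactly $\{jk:\theta_{jk}\neq 0,\ j<k\}$.

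The main obstacle will be to link the formal definition of $\hatEps$ in Theorem \ref{theory:uni}, realized through the KKT condition (\ref{eq:epsilonhat}) along the solution path in $\lambda$, to the significance-based rule (\ref{eq:sel_lambda}): I must verify that the support selected at the data-driven $\hat\lambda$ coincides with $\{jk:T_{jk}>\gamma\}$ with high probability. This needs two controls. Using the continuous-mapping argument $\hat\bJ\overset{p}{\to}\bJ$ and $\hat\bh\overset{p}{\to}\bh$ for the plug-in (\ref{eq:hatJ}) together with $\bs\overset{p}{\to}\btheta$, and the invertibility of $\bJ_\xi$ guaranteed by Condition C3 (with $\bJ_\xi^{-1}$ and $\bJ_\xi^{-1}\bh_\xi$ bounded by Condition C2), I would show that on the event that the support is correctly identified $\hat\bw_\xi\overset{p}{\to}\bJ_\xi^{-1}\bh_\xi$, so that the adjusted standard error $\text{SE}^{\text{adj}}_{jk}$ in (\ref{eq:adj_error}) converges to a strictly positive limit and the thresholding rule (\ref{eq:thetahat}) is asymptotically equivalent to comparing $T_{jk}$ against $\gamma$. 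Second, I would exploit the adaptive weight $1/S_{jk}^2$, which diverges at rate $n$ for true-null pairs since $S_{jk}=O_p(n^{-1/2})$ there, to rule out spurious inclusions; this is precisely the mechanism preventing noise in $S_{jk}$ from pushing a genuinely zero covariance past the threshold. Combining these controls with the nested (monotone in $\lambda$) structure of the selected sets shows that the infimum rule (\ref{eq:sel_lambda}) retains exactly the significant pairs, completing the identification $\Pr(\hatEps=\Eps)\to 1$.
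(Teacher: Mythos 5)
Your first half matches the paper's own argument: the separation of the statistic $T_{jk}=nS_{jk}^2/(S_{jk}^2+S_{jj}S_{kk})$ (of order $n$ for signal pairs, of order $o_p(n)$ for null pairs), the choice $\gamma=cn$ with $c$ strictly below the smallest signal level $\theta_{jk}^2/(\theta_{jk}^2+\theta_{jj}\theta_{kk})$ (possible by Condition C1), and a union bound over the finitely many pairs; combined with the definition of $\hat\lambda$ in (\ref{eq:sel_lambda}), this yields $\Pr(\hatEps\subseteq\Eps)\to 1$ exactly as in the paper. The genuine gap is in the converse inclusion $\Eps\subseteq\hatEps$, and it is twofold. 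First, the convergence $\hat\bw_\xi\overset{p}{\to}\bJ_\xi^{-1}\bh_\xi$ that you invoke is not available until you prove that the data-driven penalty vanishes, i.e.\ $\hat\lambda/n=o_p(1)$: $\hat\bw$ minimizes the penalized objective (\ref{eq:crit_true2}) at $\lambda=\hat\lambda$, and if $\hat\lambda/n$ stayed bounded away from zero, the $L_1$ term would survive in the limit and could zero out signal coefficients. The paper devotes a separate step to this, using the KKT equality at the pair ``about to enter'' the support --- which, by the already-established inclusion $\hatEps\subseteq\Eps$, is a null pair, so that $S_{jk}^2=o_p(1)$ there and hence $\hat\lambda/n = S_{jk}^2\,\vert \hat\bJ_{jk,\hatEps}\hat\bw_{\hatEps}-\hat h_{jk}\vert = o_p(1)$ after bounding $\Vert\hat\bw_{\hatEps}\Vert_1$. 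Your proposal never supplies this step, and conditioning ``on the event that the support is correctly identified'' is moreover circular, since that is precisely the event to be proved.

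Second, you use Condition C3 only for invertibility of $\bJ_\xi$, but invertibility is neither its content nor sufficient. Even with $\bJ_\xi$ invertible, some entry of $\bw^\ast_\xi=\bJ_\xi^{-1}\bh_\xi$ could be exactly zero; the corresponding signal pair would then have $\hat w_{jk}$ shrunk to zero by the $L_1$ penalty --- note that passing the test $T_{jk}>\gamma$ does not force $\hat w_{jk}\neq 0$, because membership in $\hatEps$ is governed by the KKT characterization (\ref{eq:epsilonhat}), not by the test --- and $\Eps\subseteq\hatEps$ would fail. What C3 actually provides, namely $\min_{x}\|\bJ_{\xi,\xi_{-l}}x - \bh_\xi \|_1 > c$ for every $l$, is that $\bh_\xi$ cannot be represented while omitting any single column of $\bJ_\xi$, hence every entry of $\bw^\ast_\xi$ is nonzero. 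Combined with $\hat\lambda/n=o_p(1)$ and the pointwise convergence of the fixed-dimensional strictly convex objective, this gives $\hat\bw_\xi\overset{p}{\to}\bw^\ast_\xi$ with all entries nonzero in the limit, which is the step that actually closes the proof in the paper. Your asserted ``asymptotic equivalence'' of the rule (\ref{eq:thetahat}) with thresholding $T_{jk}$ at $\gamma$ is exactly the claim that fails without these two ingredients.
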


When $p$ and $m_0$ are fixed and $n$ grows, selection consistency occurs under rather mild conditions. First note that Theorem \ref{theory:fixedp} does not make restrictive assumptions on the distribution of the original random vector $\bX$ or on its moments. The main requirement here is Condition C3, which plays a role analogous to that of the Irrepresentable Condition (IC) commonly used in the context of Lasso regression and other similar sparsity-inducing penalization methods for regression; e.g., see Zhao and Yu  \cite{zhao2006model}.  Our version of the IC asserts that the scores corresponding to zero covariances  cannot be represented as linear combinations of the scores for the non-zero covariances. The condition is needed to ensure that our procedure correctly identifies the true set of non-zero covariances while not being misled by correlations between scores.

\begin{theorem}[Local false positive error rate for large $p$] \label{theory:false_positive}
    Under Conditions C1-C3, for arbitrary $p$, letting $\gamma^{-1} = o(n^{-1})$, we have   $
    \Pr(jk \in \hatEps \  | \ jk \in \setminus \Eps) \to 0$,  as $n \to \infty$.
\end{theorem}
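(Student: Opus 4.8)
The plan is to reduce the false-positive event to the event that a single marginal significance test is spuriously passed, and then to control the latter using the known null distribution of the test statistic together with the divergence of $\gamma$. Write $T_{jk} = nS^2_{jk}/(S^2_{jk} + S_{jj}S_{kk})$ for the statistic appearing in the selection rule (\ref{eq:sel_lambda}). The crucial observation is deterministic: by the very definition of $\hat\lambda$ in (\ref{eq:sel_lambda}), the tuning parameter is chosen as the smallest value for which \emph{every} bivariate index retained in the final selected set passes the test at level $\gamma$. Hence, for each fixed $jk$ with $j \neq k$, one has the inclusion $\{jk \in \hatEps\} \subseteq \{T_{jk} > \gamma\}$, and consequently
\[
\Pr(jk \in \hatEps \mid jk \in \setminus\Eps) \le \Pr(T_{jk} > \gamma \mid \theta_{jk} = 0).
\]

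I would then invoke the null behaviour of the statistic. Fixing $jk \in \setminus\Eps$, so that $\theta_{jk} = 0$, Condition C1 keeps $\theta_{jj}$ and $\theta_{kk}$ bounded away from zero, so the denominator $S_{jj}S_{kk} \overset{p}{\to} \theta_{jj}\theta_{kk} > 0$ is nondegenerate; the convergence (\ref{nulldis}) then gives $T_{jk} \overset{d}{\to} \chi^2_1$, and in particular $T_{jk} = O_p(1)$. The assumption $\gamma^{-1} = o(n^{-1})$ forces $\gamma \to \infty$ (indeed faster than $n$), whence $\Pr(T_{jk} > \gamma) \to 0$. Combined with the inclusion above this completes the argument; note that only $\gamma \to \infty$ is strictly required here, so the stated rate is more than sufficient for the false-positive claim.

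The main obstacle, and the step deserving the most care, is establishing the deterministic inclusion rigorously. Because a larger threshold $\lambda$ admits fewer nonzero coordinates of $\hat\bw$, as the paper notes and as is visible from the soft-threshold update (\ref{eq:coordDesc}) and the characterization (\ref{eq:epsilonhat}), the selected set $\hatEps(\lambda)$ is monotone decreasing in $\lambda$, so the collection of $\lambda$ for which all retained pairs pass the test forms an upper interval whose infimum is $\hat\lambda$. Evaluating the rule at $\hat\lambda$ yields the claimed inclusion, the only delicate point being the boundary case in which a pair enters exactly at $\hat\lambda$ while failing the test, i.e. $T_{jk} = \gamma$. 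Since $T_{jk}$ is a continuous function of $\bS$ whose joint law is absolutely continuous under a positive definite $\bTheta$, the event $\{T_{jk} = \gamma\}$ has probability zero and may be discarded, so the inclusion holds almost surely. A secondary point worth flagging is that invoking (\ref{nulldis}) relies on a marginal central limit theorem for $\sqrt{n}\,S_{jk}$; because the statement is local, pertaining to a single pair, only the finite fourth-moment behaviour of $(X_j, X_k)$ is needed, which holds trivially under the Gaussian working model and more generally whenever $\E(X_j^2 X_k^2) < \infty$.
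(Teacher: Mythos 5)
Your proof is correct and shares the paper's skeleton: both arguments rest on the deterministic reduction, via the definition of $\hat\lambda$ in (\ref{eq:sel_lambda}), of the event $\{jk \in \hatEps\}$ to the event that $T_{jk} = nS_{jk}^2/(S_{jk}^2+S_{jj}S_{kk})$ exceeds $\gamma$, followed by a bound on the null behaviour of $T_{jk}$. The second step, however, is genuinely different. The paper never invokes a central limit theorem: it uses only the law of large numbers ($S_{jk}\overset{p}{\to}0$ and $S_{jj}S_{kk}$ bounded away from zero under C1) to get the crude bound $T_{jk}=o_p(n)$ under the null, and therefore needs the full strength of the hypothesis $\gamma^{-1}=o(n^{-1})$, i.e.\ $n/\gamma\to 0$, to conclude. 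You instead invoke the null limit (\ref{nulldis}), $T_{jk}\overset{d}{\to}\chi^2_1$, hence $T_{jk}=O_p(1)$, so that $\gamma\to\infty$ alone suffices; this buys a sharper conclusion (any diverging threshold works, not only super-linear ones) at the price of a stronger moment requirement, namely $\var(X_jX_k)<\infty$ for the CLT applied to $\sqrt{n}S_{jk}$. Two loose ends in your write-up deserve fixing. First, you justify the fourth-moment condition by appeal to ``the Gaussian working model,'' but the theorem makes no distributional assumption on $\bX$; Gaussianity enters only through the construction of the pairwise scores. The correct way to discharge the condition under C1--C3 is to note that C2 bounds the diagonal entries of $\bJ$, which are expectations of squared scores and hence quartic moments of $(X_j,X_k)$, so $\E(X_j^2X_k^2)<\infty$ is already implied by the stated conditions. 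Second, your treatment of the boundary case in the inclusion step is more scrupulous than the paper's (which asserts the inclusion outright), but your claim that $\bS$ has an absolutely continuous law does not follow from C1--C3 either: positive definiteness of $\bTheta$ does not rule out, say, discrete $\bX$. A cleaner repair is to prove the inclusion with the non-strict inequality, $\{jk \in \hatEps\}\subseteq\{T_{jk}\ge\gamma\}$, which is all you need, since $\Pr(T_{jk}\ge\gamma)\to 0$ follows equally from $T_{jk}=O_p(1)$ and $\gamma\to\infty$.
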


Theorem \ref{theory:false_positive} shows that the probability of a  local false positive error -- or Type I error -- for an individual covariance entry becomes negligible as $n$ increases.  Moreover, the TPL  selection procedure controls the  false positive errors locally  in large-sample scenarios, irrespective of the data size $p$, as long as the threshold $\gamma$ diverges quickly enough with $n$.

Next, we turn to selection consistency when the data dimension $p$ grows exponentially with the sample size $n$.  The next theorem is organized into two main parts.  Part (i) states that the probability of the overall false positive rate converges to zero. This property implies the sparsistency property, meaning that the TPL estimator identifies consistently the zero covariance entries as the sample size increases. Part (ii) addresses the power of our selection method in identifying the true nonzero covariances. Specifically, the probability that all true nonzero covariances are correctly identified  approaches 1. Together, these results confirm the selection consistency of the estimator.

\begin{theorem}[Selection consistency for large $p$] \label{thm:consistency}
    Under Conditions C1-C4, for any constant $r_1 \in (0, \frac{r}{8+r})$ and $r_2= \min(\frac{r}{32+4r}, \frac{4+r}{32+4r}-\frac{r_1}{4})$, let $p=o\{\exp(n^{r_1-c})\}$ and $m_0=o(n^{r_2-c})$ for some arbitrarily small $c\in (0, \min\{r_1,r_2\} )$.   Then when $\gamma = O(n^{1-2r_2})$ and $\gamma^{-1} = O(n^{-r_1-\frac{4}{8+r}})$,  we have:
   \begin{itemize}
    \item[(i)] (Global false positive error rate) $\Pr(\hatEps \cap \setminus \Eps = \emptyset) \to 1$, as $n\to \infty$.
    \item[(ii)] (Power) $\Pr(\Eps \subseteq \hatEps ) \to 1$, as $n\to \infty$.
   \end{itemize}
 Hence, we have $\Pr(\hatEps = \Eps ) \to 1 $, as $n\to \infty$.
\end{theorem}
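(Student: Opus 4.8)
\emph{Proof strategy.} The plan is to run a primal--dual witness argument anchored on the closed-form minimizer and the uniqueness guaranteed by Theorem \ref{theory:uni}. I would construct an oracle coefficient vector $\tilde\bw$ supported on the true active set $\xi$ by setting $\tilde\bw_{\setminus\xi}=0$ and $\tilde\bw_\xi=\hat\bJ_\xi^{-1}[\,\mathrm{diag}\{\hat\bJ_\xi\}-(\lambda/n)\bEta_\xi\,]$, with the signs of $\bEta_\xi$ fixed to those of the population minimizer $\bw^\ast$ defined below and $\eta_{jj}=0$ on the margins. It then suffices to show that, with probability tending to one, (a) every off-diagonal coordinate $\tilde w_{jk}$, $jk\in\xi$, is nonzero and sign-consistent, and (b) the strict dual-feasibility inequality $|\mathrm{diag}\{\hat\bJ\}_{jk}-\hat\bJ_{jk,\xi}\tilde\bw_\xi|<\lambda/(nS_{jk}^2)$ holds for every $jk\in\setminus\Eps$. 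Properties (a)--(b) certify that $\tilde\bw$ solves the Karush--Kuhn--Tucker system characterizing the minimizer of $\hat d_\lambda$, so uniqueness forces $\hat\bw=\tilde\bw$; statements (a) and (b) are then exactly Part (ii) and Part (i), and, together with the fact that all diagonal indices are always retained, they deliver $\hatEps=\Eps$.

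\emph{Concentration engine.} The probabilistic core is a uniform control of the plug-in quantities. Writing $\hat\bJ=\hat\bJ(\bs)$ and $\bJ=\bJ(\btheta)$, I would split $\hat\bJ-\bJ$ into a plug-in part $\hat\bJ(\bs)-\hat\bJ(\btheta)$ and a sampling part $\hat\bJ(\btheta)-\bJ(\btheta)$. Condition C1 keeps the score map smooth and its denominators bounded away from zero in a neighborhood of $\btheta$, into which $\bs$ falls with high probability; hence the plug-in part is $O(\|\bs-\btheta\|_{max})$ uniformly. For the sampling part, each entry is an average of products of two bivariate score components, i.e.\ a degree-four polynomial in the coordinates of $\bX$, which under Condition C4 is sub-Weibull; a truncation-plus-Bernstein argument gives entrywise tail bounds whose exponent is a fixed power of $n\delta^2$ governed by $r$. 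A union bound over the $O(p^4)$ entries of $\hat\bJ$, using $p=o\{\exp(n^{r_1-c})\}$ so that $\log p$ is dominated by the tail exponent, then yields $\|\hat\bJ-\bJ\|_{max}$, $\|\hat\bh-\bh\|_\infty$ and $\max_{jk}|S_{jk}-\theta_{jk}|$ all bounded by a vanishing $\delta_n$ with probability tending to one; the admissible window for $r_1$ is precisely what makes this domination possible.

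\emph{Power (Part ii).} On the good event I would compare $\tilde\bw_\xi$ to the population minimizer at $\lambda=0$, namely $\bw^\ast=\bJ_\xi^{-1}\bh_\xi$, which exists and is bounded by Condition C2. Condition C3 is what forces each off-diagonal coordinate of $\bw^\ast$ away from zero: if some $w^\ast_{jk}=0$, then $\bh_\xi$ would be a linear combination of the remaining $|\xi|-1$ columns of $\bJ_\xi$, contradicting $\min_x\|\bJ_{\xi,\xi_{-l}}x-\bh_\xi\|_1>c$. A perturbation bound controls $\|\tilde\bw_\xi-\bw^\ast\|_\infty$ by $\|\hat\bJ_\xi^{-1}\|_\infty(\|\hat\bh_\xi-\bh_\xi\|_\infty+(\lambda/n)\|\bEta_\xi\|_\infty)$ together with the error from inverting $\hat\bJ_\xi$ in place of $\bJ_\xi$; the latter stays negligible once $m_0\delta_n\to0$, which is exactly the constraint $m_0=o(n^{r_2-c})$. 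Since $\lambda/n$ and $\delta_n$ are driven below the signal gap guaranteed by C1 and C3, every $\tilde w_{jk}$ with $jk\in\xi$ inherits the sign of $w^\ast_{jk}$ and is nonzero, giving $\Pr(\Eps\subseteq\hatEps)\to1$.

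\emph{No false positives (Part i) and the main obstacle.} For a true zero $jk\in\setminus\Eps$, the dual-feasibility check rewrites, via the adaptive-thresholding identity of Section \ref{sec:first_order}, as the requirement that the studentized statistic $nS_{jk}^2/(S_{jk}^2+S_{jj}S_{kk})$ stay below the cutoff $\gamma$. Here the adaptive weight $1/S_{jk}^2$ is the crux: for a true zero $nS_{jk}^2$ is only of the order of that studentized statistic, so the threshold $\lambda/(nS_{jk}^2)$ inflates and dominates the bounded left-hand side, without invoking any irrepresentable condition on the zeros. It then remains to prove $\max_{jk\in\setminus\Eps}nS_{jk}^2/(S_{jk}^2+S_{jj}S_{kk})<\gamma$ uniformly. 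Under C4 the numerator $\sqrt nS_{jk}$ is an average of the sub-Weibull products $X_jX_k$ (tail index $r/2$), and a maximal inequality over the $O(p^2)$ zero pairs shows this maximum grows strictly slower than the diverging cutoff, since $\gamma^{-1}=O(n^{-r_1-4/(8+r)})$ supplies exactly the extra margin beyond $\log p$ needed to absorb the heavier-than-Gaussian tails. I expect the \textbf{main obstacle} to be precisely this uniform control of quartic (for $\hat\bJ$) and quadratic (for the studentized zeros) sub-Weibull averages: calibrating the truncation level so that the tail exponent simultaneously dominates $\log p$ and the $m_0$-dimensional inversion error is what pins down the admissible ranges of $r_1,r_2$ and the prescribed growth of $p$, $m_0$ and $\gamma$.
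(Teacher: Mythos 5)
Your concentration engine (truncation plus Hoeffding-type bounds under C4, union bounds over the $O(p^2)$ entries of $\bS$ and $O(p^4)$ entries of $\hat\bJ$, with $\log p$ dominated by the tail exponent) matches the paper's Lemmas \ref{lem:hoeffding}--\ref{lem:jepsconverge}, and your reading of C3 is the right one. But there is a structural gap that breaks both halves of your primal--dual witness argument: you treat $\lambda$ as a free deterministic tuning parameter that can be ``driven'' into a window -- small enough for sign consistency on $\xi$, large enough for strict dual feasibility on $\setminus\Eps$. In this theorem $\lambda$ is the random quantity $\hat\lambda$ defined implicitly by the rule (\ref{eq:sel_lambda}): $\hatEps$ is determined by the KKT system at $\hat\lambda$, while $\hat\lambda$ is the smallest $\lambda$ for which every selected pair passes the $\gamma$-test. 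You never establish any bound, upper or lower, on $\hat\lambda$, so neither of your certifying inequalities can be checked. The paper breaks this circularity in a fixed order: Part (i) is proved \emph{without any KKT or dual argument} -- by the very definition of $\hat\lambda$, every selected off-diagonal pair satisfies $nS_{jk}^2/(S_{jk}^2+S_{jj}S_{kk})>\gamma$, and uniform concentration (Lemma \ref{lem:sconverge} plus Hoeffding) shows all true zeros fail this test with probability tending to one; only then, conditional on $\hatEps\subseteq\Eps$, does Lemma \ref{lem:lambda} give the upper bound $\hat\lambda = O(\gamma)O_p(m_0)$, which feeds the KKT residual bound $\|\hat\bJ_\Eps\hat\bw_\Eps-\hat\bh_\Eps\|_1 = O_p(\hat\lambda m_0/n) = O_p(\gamma m_0^2/n)=o_p(1)$ used in Part (ii). Relatedly, your claim that dual feasibility for a zero pair ``rewrites as'' the studentized statistic staying below $\gamma$ conflates two different objects: the KKT threshold is $\hat\lambda/(nS_{jk}^2)$, equivalently the adjusted standard error of Section \ref{sec:first_order} built from the score covariance $\hat\bJ$, whereas the $\gamma$-test uses the asymptotic variance $S_{jk}^2+S_{jj}S_{kk}$ of $S_{jk}$; they are connected only through the implicit definition of $\hat\lambda$, which is precisely what your argument leaves unexamined.

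A secondary problem sits in your Part (ii). Your route requires sign consistency of $\tilde\bw_\xi$ relative to $\bw^\ast=\bJ_\xi^{-1}\bh_\xi$, hence a quantitative lower bound on $\min_{l}|w^\ast_{l}|$. Condition C3 delivers such a bound only of order $c/\|\bJ_{\xi,l}\|_1$, which by C2 can shrink like $1/m_0$ as $m_0$ grows; your perturbation bound would therefore have to be $o(1/m_0)$, a constraint you never track (you speak of a fixed ``signal gap guaranteed by C1 and C3''). The paper's argument sidesteps signs and gaps entirely: it shows $\|\bJ_\xi\hat\bw_\xi-\bh_\xi\|_1=o_p(1)$ for the \emph{actual} minimizer and then applies C3 directly to $\hat\bw_\xi$ -- if any coordinate of $\hat\bw_\xi$ were zero, $\bJ_\xi\hat\bw_\xi$ would be a combination of $|\xi|-1$ columns of $\bJ_\xi$ and the residual would exceed the constant $c$, a contradiction. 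So while your rate bookkeeping could probably be repaired to meet the $o(1/m_0)$ requirement, the proposal as written fails at the point where the data-driven $\hat\lambda$, the definition (\ref{eq:epsilonhat}) of $\hatEps$, and the two parts of the theorem interlock.
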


A few remarks on this result are in order.  In the increasing $p$ setting selection consistency hinges on Condition C4, which assumes exponential tail behavior for the entries of the random vector $\bX$ with rate $r$. This assumption is milder and more general compared to other common methods often requiring sub-Gaussian random variables (see, e.g., Bickel and Levina \cite{bickel2008covariance} and Lam \cite{lam2009sparsistency}). Model selection consistency for the TPL selection is thus understood to apply to a broader range of distributions encountered in practical applications, including Gamma-type distributions and those with Gaussian tails.

Differently from Theorem \ref{theory:fixedp} where $p$ and $m_0$ are fixed, the quantities $r_1>0$ and $r_2>0$ determine the maximum rates allowed for $p$ and $m_0$, respectively, with $r_1$ being negatively related to $r_2$. Lighter tails for the data distribution imply  that one can afford a larger data dimension $p$. Moreover, as expected, it should be easier to handle a larger proportion of non-zero elements when $p$ grows slowly, and vice versa. To capture this behavior, instead of fixing $r_1$ at its maximum, Theorem \ref{thm:consistency} allows one to specify the trade-off between sparsity and total data dimension by choosing $r_1$ within the range $0 < r_1 < \frac{r}{8+r}$, and allowing $r_2$ to increase accordingly. For example, if  $\bX$ is Gaussian, the tail of each component decays at rate $\exp(- t^2/2)$ as $t$ increases, meaning that $r=2$  and $0<r_1 < 1/5$. At the maximal rate allowed for $p$,  $r_1 = 1/5$ and $\gamma$ should be between $n^{0.6}$ and $n^{0.9}$. On the other hand, at the minimal rate when $r_1$ is arbitrarily close to zero,  $\gamma$ should be between $n^{0.4}$ and $n^{0.9}$.

The penalty coefficient $\lambda$ in our problem is a random variable depending on the choice of the tuning parameter $\gamma$, as defined in \eqref{eq:sel_lambda}. This is an important  difference compared to other  $L_1$-penalization methods, where the tuning constants controlling the sparsity level are fixed or treated as sequences typically vanishing as $n$ grows. Simple computations show that $1-2r_2 > r_1 + \frac{4}{8+r}$, ensuring that $\gamma$ always exists. This means that one can always choose $\gamma$ appropriately so that both the non-zero and zero elements in the covariance matrix can be correctly identified according to Parts (i) and (ii) of Theorem \ref{thm:consistency}.

We conclude this section by studying the behavior of the estimator implied by our selection procedure. Denote $\hat \bTheta^{oml}$ the oracle ML estimator of the covariance matrix which assumes the knowledge of no-zero entries of $\bTheta$, with elements $\hat \bTheta^{oml}_{jk} = S_{jk}$ if $\theta_{jk}\neq 0$, and $\hat \bTheta^{oml}_{jk} = 0$ if $\theta_{jk}=0$. Next, we compare the TPL estimator to $\hat \bTheta^{oml}$.
\begin{corollary}[Convergence to the oracle ML estimator] \label{corollary:mle} Under the assumptions of Theorem \ref{thm:consistency}, we have $\Pr(  \hat \bTheta = \hat \bTheta^{oml}) \to 1$, as  $n \to \infty$.
\end{corollary}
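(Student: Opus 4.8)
The plan is to show that the selection-consistency event $\{\hatEps = \Eps\}$, whose probability tends to one by Theorem \ref{thm:consistency}, is contained (up to a probability-one set) in the event $\{\hat\bTheta = \hat\bTheta^{oml}\}$. The guiding observation is that both estimators are assembled from the \emph{same} sample covariances: the oracle places $S_{jk}$ on the true support $\Eps$ and $0$ on $\setminus\Eps$, whereas the TPL estimator, via the thresholding rule (\ref{eq:thetahat1}), places $S_{jk}$ on the selected set and $0$ elsewhere. Hence the two matrices can disagree only through a disagreement of their active sets, and the corollary reduces to matching these sets.

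First I would make precise the correspondence between the selected index set $\hatEps$ and the entries on which $\hat\theta_{jk}=S_{jk}$. By Theorem \ref{theory:uni} we have $\hat\bw_{\setminus\hatEps}=0$, so every off-diagonal index outside $\hatEps$ is set to $0$ by (\ref{eq:thetahat1}), which agrees with the oracle on $\setminus\Eps$ whenever $\hatEps=\Eps$. For off-diagonal indices \emph{inside} $\hatEps$ I would argue that $\hat w_{jk}\neq 0$ with probability one: positive definiteness of $\bTheta$ makes the scores in (\ref{eq:u_{jk}}) nonzero almost surely (the argument used right after Theorem \ref{theory:uni} to guarantee invertibility of $\hat\bJ_{\hatEps}$), so the active coordinates of the closed-form solution (\ref{prop:uniqueness}), namely $\hat\bw_{\hatEps}=\hat\bJ_{\hatEps}^{-1}[\mathrm{diag}\{\hat\bJ_{\hatEps}\}-\tfrac{\lambda}{n}\bEta]$, carry no exact zeros almost surely. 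Thus, on $\{\hatEps=\Eps\}$, each nonzero off-diagonal covariance $jk\in\Eps$ receives $\hat\theta_{jk}=S_{jk}=\hat\theta^{oml}_{jk}$.

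Second I would dispose of the diagonal entries, which are handled unconditionally. Every index $jj$ belongs to both $\Eps$ and $\hatEps$: the variances are bounded away from zero by Condition C1, and the right-hand side of the selection rule (\ref{eq:epsilonhat}) vanishes for $j=k$ because the penalty in (\ref{eq:crit_true2}) acts only on off-diagonal terms. Since the marginal estimating equation (\ref{eq:est_eq}) is always solved by $\theta_{jj}=S_{jj}$, we obtain $\hat\theta_{jj}=S_{jj}=\hat\theta^{oml}_{jj}$ for every $j$, regardless of the event considered.

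Combining the two cases, with probability one $\{\hatEps=\Eps\}\subseteq\{\hat\bTheta=\hat\bTheta^{oml}\}$, so that $\Pr(\hat\bTheta=\hat\bTheta^{oml})\ge\Pr(\hatEps=\Eps)\to 1$ by Theorem \ref{thm:consistency}, which is exactly the claim. The only genuine obstacle is the bookkeeping in the first step, namely confirming that no index in $\hatEps$ carries a vanishing coefficient $\hat w_{jk}$, so that ``being selected'' and ``receiving $S_{jk}$'' coincide almost surely; this rests on the probability-one nondegeneracy of the score vectors rather than on any new concentration argument, and everything else is an immediate consequence of the already-established selection consistency.
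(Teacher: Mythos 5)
Your reduction of the corollary to selection consistency is the right skeleton, and it matches what the paper intends (the paper gives no separate proof; Corollary \ref{corollary:mle} is meant to follow from Theorem \ref{thm:consistency} plus the fact that both $\hat\bTheta$ and $\hat\bTheta^{oml}$ fill their active sets with the same entries $S_{jk}$). The genuine gap is the step you yourself flag as the ``only obstacle'': the claim that, almost surely, no index in $\hatEps$ carries a vanishing coefficient. Your justification is a non sequitur. Nondegeneracy of the scores gives invertibility of $\hat \bJ_{\hatEps}$, i.e.\ \emph{uniqueness} of $\hat\bw$; it says nothing about whether coordinates of the vector $\hat \bJ_{\hatEps}^{-1}[\mathrm{diag}\{\hat \bJ_{\hatEps}\} - \frac{\lambda}{n}\bEta]$ vanish. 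The set $\hatEps$ in (\ref{eq:epsilonhat}) is defined by a weak inequality, so it contains the support of $\hat\bw$ but can strictly contain it at knot configurations where the KKT condition holds with equality and $\hat w_{jk}=0$. This is not a removable measure-zero curiosity here: $\hat\lambda$ in (\ref{eq:sel_lambda}) is an \emph{infimum} along the solution path, which is exactly the kind of data-dependent quantity that lands on knots (the test in (\ref{eq:sel_lambda}) can only change status when the membership of $\hatEps$ changes), and the paper never assumes an absolutely continuous data distribution (Condition C4 is only a tail bound), so there is no basis for ``probability one'' claims about exact zeros of data-dependent functionals. The same defect infects your diagonal step: rule (\ref{eq:thetahat1}) sets $\hat\theta_{jj}=0$ whenever $\hat w_{jj}=0$, so ``$jj\in\hatEps$ always'' and ``equation (\ref{eq:est_eq}) is solved by $S_{jj}$'' do not by themselves give $\hat\theta_{jj}=S_{jj}$; you still need $\hat w_{jj}\neq 0$.

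The fix requires no almost-sure argument and is already in the paper: route the inclusion of $\Eps$ through the coefficients rather than through $\hatEps$. The proof of part (ii) of Theorem \ref{thm:consistency} establishes $\| \bJ_\xi \hat\bw_\xi - \bh_\xi\|_1 = o_p(1)$, and Condition C3 then forces every entry of $\hat\bw_\xi$ to be nonzero with probability tending to one; Lemma \ref{lem:wnorm} gives in addition that $\hat\bw_{\Eps\setminus\xi}$ is a vector of ones with probability tending to one. Hence, with probability tending to one, $\hat w_{jk}\neq 0$ for every $jk\in\Eps$ (off-diagonal and diagonal alike), so $\hat\theta_{jk}=S_{jk}=\hat\bTheta^{oml}_{jk}$ on $\Eps$. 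Conversely, part (i) of Theorem \ref{thm:consistency} together with the KKT condition (\ref{eq:kkt}) (a nonzero $\hat w_{jk}$ forces $jk\in\hatEps$, equivalently $\hat\bw_{\setminus\hatEps}=0$ from Theorem \ref{theory:uni}) gives $\hat w_{jk}=0$, hence $\hat\theta_{jk}=0=\hat\bTheta^{oml}_{jk}$, for all $jk\in\setminus\Eps$ with probability tending to one. Intersecting the two events yields $\Pr(\hat\bTheta=\hat\bTheta^{oml})\to 1$. In short: your decomposition is correct, but the bridge between ``selected'' and ``receives $S_{jk}$'' must be the asymptotic argument the paper already proved (Condition C3 via Theorem \ref{thm:consistency}(ii) and Lemma \ref{lem:wnorm}), not a probability-one nondegeneracy claim.
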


Corollary \ref{corollary:mle} states that, as the sample size grows, our estimator converges in probability to the oracle ML estimator, as if the exact locations of the nonzero correlations were known in advance, despite the large size of the original covariance matrix and the initially unknown locations of these nonzero correlations. Furthermore, it can be easily shown that $r_2$ is always positive, implying that $m_0$ is allowed to grow geometrically with $n$ to achieve the optimal rate in the worst-case scenario. In comparison, Lam and Fan \cite{lam2009sparsistency} demonstrated that, due to the control of bias, $L_1$-penalized likelihood approaches are limited to an $O(1)$ number of non-zero covariance elements in the worst-case scenario to obtain the optimal rate, regardless of how fast or slow $p$ grows.  Nonetheless, since we allow $p$ to grow exponentially with $n$, $m_0$ cannot grow at the rate $O(p)$ stated in Lam and Fan \cite{lam2009sparsistency}. This constraint depends on our choice of the score covariance estimator $\hat \bJ$ defined in (\ref{eq:hatJ}), which would become nearly singular under selection consistency, because $m_0$ would be much greater than the sample size $n$.

\section{Numerical examples}\label{sec:num}
\subsection{Monte Carlo experiments} \label{sec:MC}

In this section, we study the finite-sample properties of the TPL estimator focusing on its support-recovery performance. We draw $n$ i.i.d. samples from the $p$-variate normal distribution $N_p(\bf0, \bTheta)$ under different  structures for $\bTheta$ by varying the sparsity level  $\tau$, defined as the proportion of zero  off-diagonal elements in $\bTheta$. The considered covariance structures are described as follows, and are also depicted in Figure \ref{fig:covStruct}. (i) {\it Block diagonal structure}: $\bTheta_{jj} = 1$, for $j = 1,\dots,p$, while off-diagonal elements are zero except for a block of adjacent variables, and whose dimension depends on $\tau$. The entries of the off-diagonal elements of the nonzero block are drawn independently from the univariate normal distribution $N(0.5, 0.05^2)$. Additionally, we also consider first-order autoregressive model to determine the elements in the non-null block; the results are similar to those for the block diagional structure and are therefore omitted here. (ii) {\it Sparse at random structure}. The non-zero entries of $\bTheta$ are generated at random from an Erd\H{o}s-Rényi (ER) graph model with probability of connection $\tau$. The covariance matrix $\bTheta$ is obtained by finding a positive definite matrix closest to the sample covariance on the given ER random graph by the iterative conditional fitting algorithm of Chaduri et al. \cite{chaudhuri2007estimation}. This structure is challenging for support recovery, since $\bTheta$ tends to have many nearly zero entries among the non-zero entries.

\begin{figure}[t]
  % Equal length
  \hspace*{\fill}%
  \subcaptionbox{}{\includegraphics[scale = 0.48]{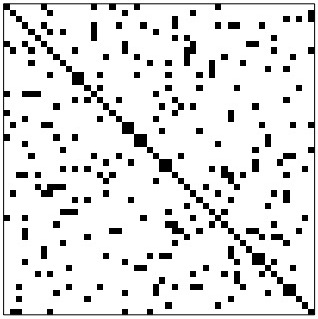}}\hspace{3em}%
  \subcaptionbox{}{\includegraphics[scale = 0.48]{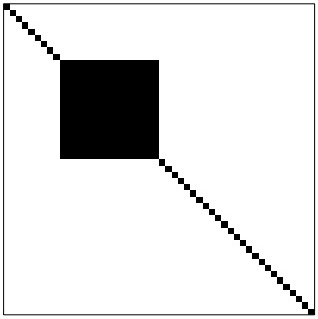}}%
  \hspace*{\fill}%
  \caption{Examples of sparse at random (left) and block diagonal (right) structures for $\bTheta$ for $p = 50$ and sparsity level $\tau = 0.9$. Black squares represent non-zero entries.}
  \label{fig:covStruct}
\end{figure}

We consider  $n = 40,100,250$, $p = 20,50,150$ and $\tau = 0.5,0.9$. which allow us to explore both $n>p$ and $n<p$ scenarios. The performance of the method explored in both sparse and relatively dense scenarios, with the latter deviating somewhat from the typical high-dimensional setting. For each combination of $n$, $p$  and $\tau$, we generate $B = 100$  Monte Carlo samples. The tuning constant $\lambda$ is selected according to (\ref{eq:sel_lambda}) with $\alpha = 0.1$ in all our simulations. The reported analyses have been conducted in the \texttt{R} environment \citep{RCoreTeam}, with some subroutines  written in \texttt{C++} to reduce the computing time. The support recovery performance of the TPL method is assessed using the following sensitivity (SN), specificity (SP) and  accuracy (AC)  metrics:
\begin{eqnarray*}
\text{SN} &=& \frac{\sum_{j<k} I( \hat\theta_{jk} \neq 0, \theta_{jk} \neq 0 )}{m_0}, \;\;\;
\text{SP} = \frac{\sum_{j<k} I(\hat\theta_{jk} = 0, \theta_{jk} = 0)}{ m-p-m_0 }, \\
\text{AC} &=& \frac{\sum_{j<k}I(\hat\theta_{jk} \neq 0, \theta_{jk} \neq 0) + \sum_{j<k} I( \hat\theta_{jk} = 0, \theta_{jk} = 0)}{p(p-1)/2}.
\end{eqnarray*}

\begin{table}[t]
\centering
\caption{Monte Carlo estimates of sensitivity (SN), specificity (SP) and accuracy (AC) for the truncated pairwise likelihood (TPL) estimator. Results are based on $B = 100$ samples of size $n$ from a $p$-variate normal distribution $N_p(\mathbf{0}, \bTheta)$, where $\bTheta$ has low ($\tau = 0.5$) or high sparsity ($\tau = 0.9$) according to  block diagonal or random structures.}
 \addtolength{\tabcolsep}{-3pt}
\begin{tabular}{cccccccccccccc}
\hline
\multicolumn{1}{c}{} & & \multicolumn{1}{c}{} & \multicolumn{3}{c}{{SN}} && \multicolumn{3}{c}{{SP}} && \multicolumn{3}{c}{{AC}} \\
\cline{4-14}
$\tau$ & $p$ & & \multicolumn{3}{c}{$n$} && \multicolumn{3}{c}{$n$}  && \multicolumn{3}{c}{$n$} \\
& & & 40 & 100 & 250 && 40 & 100 & 250 && 40 & 100 & 250\\
\cline{4-14}
     & &  & \multicolumn{11}{c}{{ Block diagonal}} \\
     & 20  && 0.88 & 0.99 & 1.00 && 0.97 & 0.95 & 0.92 && 0.93 & 0.97 & 0.96\\
0.5  & 50  && 0.85 & 0.99 & 1.00 && 0.98 & 0.96 & 0.94 && 0.92 & 0.98 & 0.97\\
     & 150 && 0.74 & 0.99 & 1.00 && 0.99 & 0.97 & 0.95 && 0.86 & 0.98 & 0.97\\
\\
     & 20  && 0.89 & 0.99 & 1.00 && 0.96 & 0.94 & 0.92 && 0.96 & 0.95 & 0.93\\
0.9  & 50  && 0.82 & 0.99 & 1.00 && 0.98 & 0.96 & 0.94 && 0.96 & 0.96 & 0.94\\
     & 150 && 0.73 & 0.99 & 1.00 && 0.99 & 0.97 & 0.95 && 0.96 & 0.97 & 0.95\\
     & & & \multicolumn{11}{c}{{ Sparse at random}} \\
     & 20  && 0.36 & 0.46 & 0.43 && 0.97 & 0.96 & 0.95 && 0.67 & 0.70 & 0.70\\
0.5  & 50  && 0.23 & 0.24 & 0.29 && 0.98 & 0.97 & 0.95 && 0.60 & 0.62 & 0.62\\
     & 150 && 0.11 & 0.14 & 0.15 && 0.99 & 0.97 & 0.95 && 0.55 & 0.55 & 0.65\\
\\
     & 20  && 0.66 & 0.67 & 0.55 && 0.96 & 0.95 & 0.93 && 0.93 & 0.92 & 0.90\\
0.9  & 50  && 0.34 & 0.31 & 0.36 && 0.98 & 0.96 & 0.94 && 0.91 & 0.89 & 0.88\\
     & 150 && 0.15 & 0.18 & 0.22 && 0.99 & 0.96 & 0.95 && 0.90 & 0.89 & 0.88\\
\hline
\end{tabular}
\label{tab:suppRec}
\end{table}

Table \ref{tab:suppRec} shows Monte Carlo estimates of SN, SP and AC statistics  for both block diagonal and sparse at random structures. For the block diagonal structure, the support recovery looks generally favorable: all the metrics appear to be rather insensitive to changes in $n, p$ and $\tau$, attaining excellent results across all the settings. Particularly, the overall accuracy remains stable regardless of $\tau$ and $p$. While the specificity generally improves with $p$, a slight degradation in sensitivity is observed when $n = 40$ and $p = 150$. Interestingly, in this setting the TPL estimator maintains desirable properties even with low sparsity ($\tau=0.5$), regardless of the data dimension, pointing out the applicability of the methodology in scenarios beyond sparse estimation.

Interesting observations can be drawn from the more challenging sparse at random setting, where many elements nearly zero make it harder to reconstruct the exact covariance structure. Although specificity generally increases with $p$, sensitivity is not entirely satisfactory, especially with $\tau = 0.5$ and $p$  large. While possibly disappointing, this behavior is linked to the relative efficiency of the estimator. Particularly,  scenarios where the sensitivity is lower correspond also to ones where $\hat{\bTheta}$ outperforms the oracle ML estimator $\hat \bTheta^{oml}$, our method tends to over-sparsify, especially in dense scenarios, by shrinking small but non-zero elements to zero. This results in many false negatives, reducing sensitivity. However, over-sparsification significantly reduces variance, thus reducing the overall mean squared error.

\subsection{Mouse brain single-cell RNA data} \label{sec:real_data}

Single-cell transcriptome sequencing (scRNA-seq) is a powerful technique that enables RNA sequencing from individual cells and allows the study of their specific gene expression profiles \citep{kolodziejczyk2015technology}. We illustrate the TPL estimtion procedure by  analyzing a dataset collected by Zeisel et al. \cite{zeisel2015cell} to provide a census of two brain regions, namely the hippocampal CA1 and the primary somatosensory cortex. The dataset is available in the Gene Expression Omnibus database (accession number \texttt{GSE60361}) and consists of $n = 2816$ mouse brain cells for which the expression of $p = 19839$ genes has been measured. In our analysis, we focus on two types of immune cells,  \emph{microglia} and \emph{perivascular macrophages}, corresponding to $n_\texttt{mgl} = 28$ and $n_\texttt{pvm} = 50$ observations, respectively. Although they originate from distinct developmental pathways, these two cell types are closely related and play a crucial role in brain development, homeostatsis regulation and  neurodegenerative diseases. In this context,  sparse covariance estimates can disentangle complex gene interactions and offer visual summaries through their link to covariance graph models and gene co-expression networks \citep{chaudhuri2007estimation}. Data quality checks are conducted following Amezquita et al. \cite{amezquita2020orchestrating}, while data normalization, pre-processing and feature filtering steps follow Townes et al. \cite{townes2019feature}. The number of considered genes is therefore reduced to $p=200$. According to the literature, this is a conservative overestimate, as the number of informative genes in scRNA-seq is typically relatively small within a single tissue, due to limited biological variation.

In Figure \ref{fig:EstCovRes01}, we show covariance estimates for the two types of cells obtained by the TPL estimator. Consistently with our experiments on simulated data, we use $\alpha = 0.1$ to select $\lambda$ according to the criterion in \eqref{eq:sel_lambda}.
A first visual inspection highlights how the estimated sparsity structures are different. Although we considered the same $\alpha$, the sparsity pattern looks strongly dependent on the cell type. Particularly, the number of non-zero off-diagonal elements corresponds to 7.89\% and 13.93\% for  \emph{microglia} and \emph{perivascular macrophages} types, respectively. The non-zero correlations are generally positive, while  negative correlations are, in most of the cases, shrunk to zero. The estimated matrices differ not only in terms of sparsity magnitude, but also in terms of the location of the non-zero elements.  In Figure \ref{fig:EstCovResLambdaVar}, we show estimated covariance matrices for  \emph{microglia} for $\alpha = 0.01$ and $\alpha = 0.4$, which correspond to 2.58\% and 26.98\% non-zero elements in $\hat \bTheta$, respectively. This illustrates how the proposed thresholding approach can be used to explore different requirements in terms of sparsity magnitude by tuning $\alpha$. In turn, this is a starting point for further explorations which can provide some insight on the relationships among the genes in the considered cells.

\begin{figure}[t]
  % Equal length
  \hspace*{\fill}%
  \subcaptionbox{ }{\includegraphics[scale = 0.31]{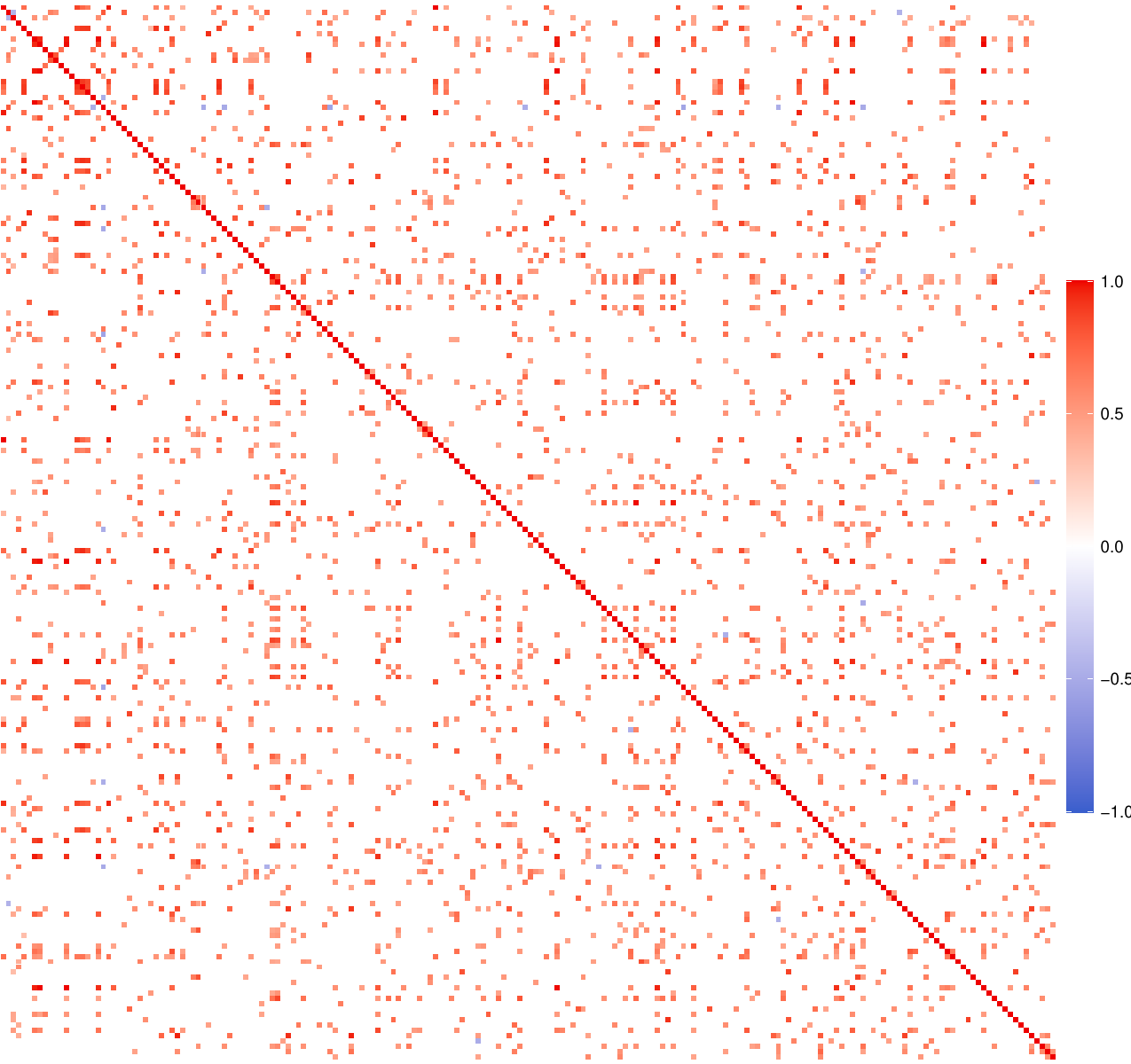}}\hspace{1em}%
  \subcaptionbox{ }{\includegraphics[scale = 0.31]{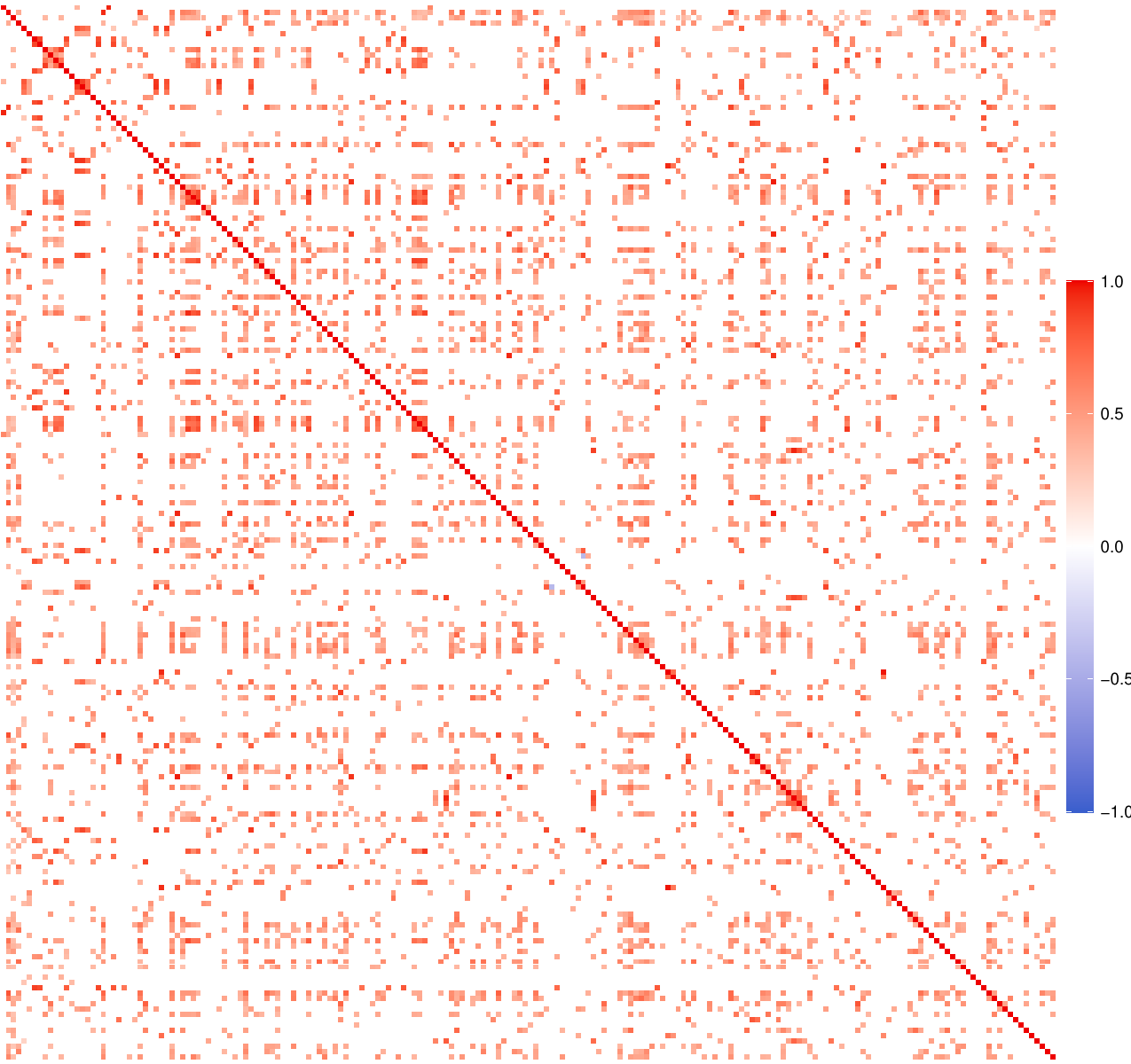}}%
  \hspace*{\fill}%
  \caption{Estimated sparse correlation matrices for \emph{microglia} cells (a) and \emph{perivascular macrophages} cells (b) using the TPL estimator with $\alpha = 0.1$. }
  \label{fig:EstCovRes01}
\end{figure}

\begin{figure}[t]
  % Equal length
  \hspace*{\fill}%
  \subcaptionbox{ }{\includegraphics[scale = 0.31]{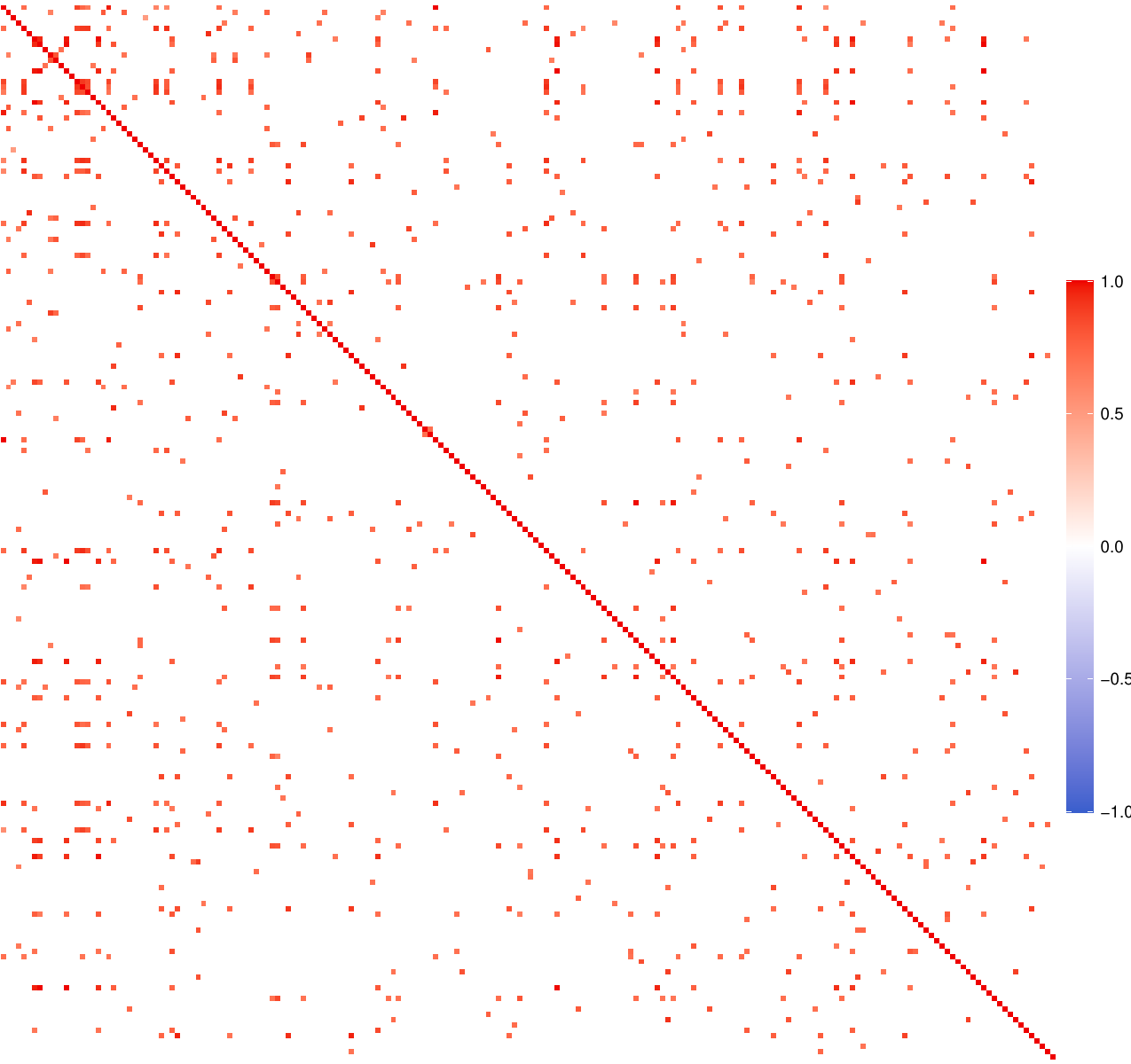}}\hspace{1em}%
  \subcaptionbox{ }{\includegraphics[scale = 0.31]{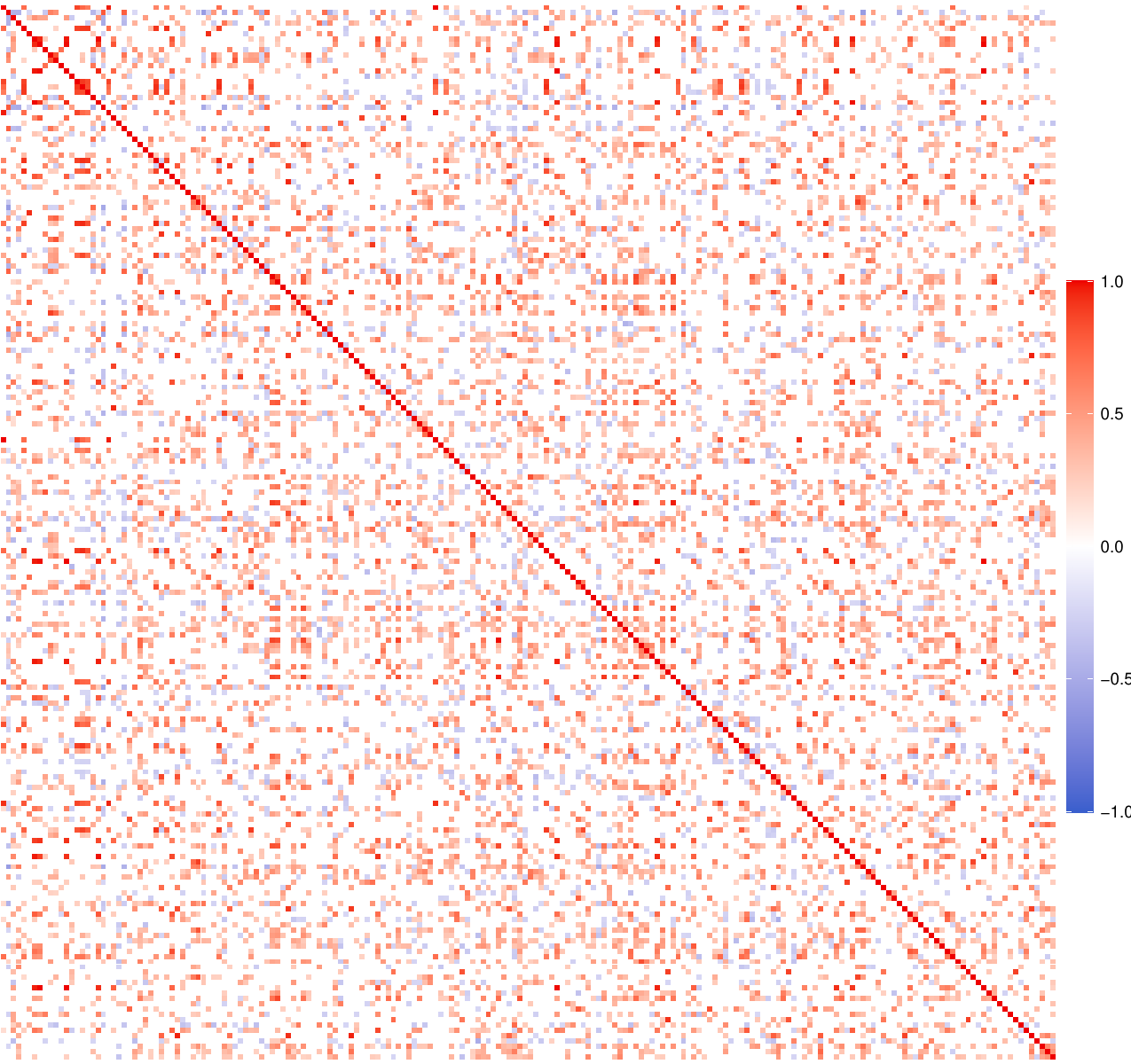}}%
  \hspace*{\fill}%
  \caption{Estimated sparse correlation matrices for \emph{microglia} cells using the TPL estimator $\alpha = 0.01$ (left) and $\alpha = 0.4$ (right). }
  \label{fig:EstCovResLambdaVar}
\end{figure}

\section{Conclusion and final remarks} \label{sec:conclusion}

Despite recent methodological advancements in high-dimensional covariance estimation, significant challenges remain in constructing unbiased estimators achieving optimal  statistical accuracy, while ensuring model selection consistency when $ p $ is much larger than $ n $. The primary contribution of this work is the development of a framework for high-dimensional covariance selection and estimation, featuring some distinctive traits of penalized likelihood and thresholding methodologies. The selection rule introduced in Equations (\ref{eq:crit_true})--(\ref{eq:crit_true1_3}) penalizes the discrepancy between the PL and ML score functions, thus aiming to maximize the statistical accuracy encoded by the PL score function for a given level of sparsity. Similar to Lasso-type approaches in regression, our method leverages the geometric properties of the $L_1$ penalty to set certain terms of the PL function to zero (see Theorem \ref{theory:uni}). The new criterion also functions as a form of adaptive thresholding, adjusting the marginal information that determines selection uncertainty based on the chosen covariance estimates (see Section \ref{sec:first_order}).

A key distinction of our TPL procedure from existing penalization approaches for high-dimensional estimation is its focus on selecting terms in the pairwise likelihood function, with each term corresponding to an entire bivariate model, rather than penalizing individual elements of $\bTheta$ in a likelihood criterion.
While penalized likelihood estimators rely on biased estimating equations to induce selection \cite{fan2001variable, zou2006adaptive}, an  advantage of our construction is that the resulting pairwise estimating equations are inherently unbiased. Therefore, conditional on consistent selection of the relevant variable pairs provided in Theorem \ref{thm:consistency}, the resulting covariance estimates remain unbiased and the  estimator $\hat \bTheta$ behaves asymptotically as the oracle ML estimator (Corollary \ref{corollary:mle}). In terms of model selection performance, we showed that the TLP procedure has the selection consistency property when $p$ grows exponentially in $n$ and the data generating process has arbitrary exponential tails. The selection consistency result stated in Theorem \ref{thm:consistency} is generally stronger than sparsistency alone - i.e., the zero covariances are correctly estimated with probability tending to one -- with the latter implied by Part (i) of  Theorem \ref{thm:consistency}. Since the conditions outlined in Theorem \ref{thm:consistency} also ensure power converging to one, they are more stringent than necessary to prove sparsistency alone.

The form of the solution given in Theorem \ref{theory:uni} and the discussion in Sections \ref{sec:first_order} reveals a connection between PL estimation and adaptive thresholding frameworks that, to our knowledge, has not been previously explored.  Specifically, the criterion function in Equation (\ref{eq:crit_true}) introduces a general method for constructing adaptive thresholding procedures, depending on the distributional assumption for  the pairwise score and the type of the penalty. For instance, when the variance elements on the diagonal of $\bTheta$ are fixed, the score covariance matrix $\hat \bJ$ becomes diagonal, meaning that the pairwise scores contain information on separate covariance parameters. Consequently, the resulting thresholding rule matches the adaptive thresholding method of Cai and Liu  \cite{cai2011adaptive}. In future research, sparse covariance estimation in the context of heavy-tailed data distributions may be achieved using pairwise scores based on bivariate t-distributions. Similarly, estimating the dependence structure for high-dimensional extreme events could be explored using bivariate extreme value distributions.

The numerical results presented in Section \ref{sec:num} corroborate the support recovery properties of the TPL procedure for sparse matrices in finite samples for different covariance structures. Additionally, they provide insight into the behavior of the estimator in the challenging case of sparse at random unstructured covariance with relatively low sparsity. The presence of many nearly zero covariance elements leads to general overshrinkage based on our naive choice of $\lambda$ based on the chi-square 0.9-quantile. While reducing the sensitivity of the method, this also helps avoid the accumulation of estimation errors, resulting in mean squared error considerably smaller than that of the oracle ML estimator. Overall this behavior suggests that further improvements are  possible through adaptive selection of $\lambda$. Particulary, the selection criterion in Equation (\ref{eq:sel_lambda}) suggests treating the selection of $\lambda$ as a top-down sequential hypothesis testing procedure for nested models. Following the approach of Lockhart et al. \cite{lockhart2014significance} in the context of sparse linear regression, one can order hypotheses according to a set of knots for $\lambda$, such as $\lambda_{max} > \dots > \lambda_{min} \ge 0$, each corresponding to the entry of a new pairwise score.  Starting from $\lambda_{max}$ corresponding to a diagonal $\hat{\mathbf{\Theta}}$, pairwise scores are added at each knot until a relatively dense estimate $\hat{\mathbf{\Theta}}$ is reached at $\lambda_{min}$. This opens avenues for future research on the interplay between hyperparameter selection and sequential model selection procedures, particularly in controlling false discovery rates in ordered testing settings  \cite{g2016sequential}.

%\section*{Acknowledgement}

%Zhendong Huang's work was supported by the Australian Research Council under Grant DP210102168.  

%%%%%%%%%%%%%%%%%%%%%%%%%%%%%%%%%%%%%%%%%%%%%%
%% Single Appendix:                         %%
%%%%%%%%%%%%%%%%%%%%%%%%%%%%%%%%%%%%%%%%%%%%%%
\begin{appendix}

\section*{Appendix - Proofs of theorems}

\begin{proof}[\bf Proof of Theorem \ref{theory:uni}]
Note that $\bu_{jk}(\btheta; \bX)$ is the only sub-likelihood score with the element in position $\{(2p+2-j)(j-1)/2+k+1-j\}$  containing $\theta_{jk}$ for $j<k$. Since $\bTheta$ is positive definite, we have $\theta_{jk}^2 < \theta_{jj}\theta_{kk}$. This implies that $\bu_{jk}$ is a nonzero vector with probability one. By Formula (\ref{eq:score_mar}) and (\ref{eq:score_mar2}), all bivariate scores $\bu_{jk}$ are linearly independent. Thus, by the rank equality of Gram matrices, $\hat \bJ$ and $\hat \bJ_{\hat \varepsilon} $ are both full ranked.   %\sout{Under condition \ref{con1}, the corresponding element of $U_{jk}(X,\theta)$ defined in (\ref{eq:ujk3}) is non-zero with probability one.}
This implies that $\hat \bw$ exists and is unique due to  the strict convexity of the first term of Criterion $\hat d_\lambda(\bw)$ (\ref{eq:crit_true2}) and the convexity of the  second (penalty) term.
Let $\bu_{jk}^{(i)} = \bu_{jk}(\bs; \bX^{(i)})$ and $\bu^{(i)} = \bu(\bs, \hat \bw;\bX^{(i)})$. By the Karush-Kuhn-Tucker (KKT) conditions for quadratic optimization, the solution must satisfy
\begin{equation} \label{eq:kkt}
\sum_{i=1}^n {\bu_{jk}^{(i)}}^\top \bu^{(i)} - \sum_{i=1}^n {\bu_{jk}^{(i)}}^\top \bu_{jk}^{(i)} + \lambda S_{jk}^{-2}  \eta_{jk} =0,
\end{equation}
for $1\leq j \le k \leq p$, where $\eta_{jk}$ is defined in (\ref{eq:eta_jk}). The statement in the proposition then follows by solving the KKT equations in (\ref{eq:kkt}).
\end{proof}

\begin{lemma}\label{lem:hoeffding}
    Under Conditions C4, for any constant $r_1 \in (0, \frac{r}{8+r})$ let $p=o\{\exp(n^{r_1-c})\}$ for some arbitrarily small $c\in(0,r_1)$, then 
    \begin{itemize}
        \item[(i)] $\Pr( \max_i\| \bX^{(i)} \|^4_\infty > n^{\frac{4}{8+r}} ) \to 0$, as $n\to\infty$. 
        \item[(ii)] $\Pr( \max_i\| \bX^{(i)} \|^2_\infty > n^{\frac{2}{8+r}} ) \to 0$, as $n\to\infty$. 
    \end{itemize}
\end{lemma}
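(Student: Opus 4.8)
The plan is to recognize that the two statements concern the \emph{same} event and then dispatch it with a single union bound against the tail in Condition C4. First I would note that, since $x\mapsto x^2$ and $x\mapsto x^4$ are strictly increasing on $[0,\infty)$ and the maximum commutes with monotone increasing maps, the event in part (i) and the event in part (ii) both coincide with
\[
A_n := \Big\{ \max_{1\le i\le n}\ \max_{1\le j\le p} |X_j^{(i)}| > n^{1/(8+r)} \Big\},
\]
because $\big(n^{1/(8+r)}\big)^4 = n^{4/(8+r)}$ and $\big(n^{1/(8+r)}\big)^2 = n^{2/(8+r)}$. Hence it suffices to prove $\Pr(A_n)\to 0$, and both parts follow simultaneously.

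Next I would apply the union bound over all $np$ coordinates and invoke Condition C4. Writing $t_n = n^{1/(8+r)}$ and letting $c_0$ denote the constant from C4, the identical distribution of the entries gives
\[
\Pr(A_n) \le \sum_{i=1}^n\sum_{j=1}^p \Pr\big(|X_j^{(i)}| > t_n\big) \le c_0\, n\, p\, \exp\big(-t_n^{\,r}\big) = c_0\, n\, p\, \exp\big(-n^{r/(8+r)}\big).
\]
Substituting the assumed growth rate $p = o\{\exp(n^{r_1-c})\}$, so that $p \le \exp(n^{r_1-c})$ for all large $n$, yields
\[
\Pr(A_n) \le c_0\, n\, \exp\big(n^{r_1-c} - n^{r/(8+r)}\big).
\]

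The only thing left to check is the exponent arithmetic, and this is where the admissible range of the rates does the work rather than posing a genuine obstacle. Since $r_1 < \tfrac{r}{8+r}$ and $c\in(0,r_1)$, the offset exponent satisfies $r_1-c < r_1 < \tfrac{r}{8+r}$; setting $\alpha = \tfrac{r}{8+r}$ and $\beta = r_1-c < \alpha$ we have $n^{\beta} = o(n^{\alpha})$, so for $n$ large enough $n^{\alpha}-n^{\beta}\ge \tfrac12 n^{\alpha}$ and therefore $\Pr(A_n)\le c_0\, n\, \exp(-\tfrac12 n^{\alpha})\to 0$, the stretched-exponential decay dominating the polynomial factor $n$. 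I expect no serious difficulty in this argument: it is a standard maximal inequality, and the only conceptual step is the initial observation that parts (i) and (ii) collapse to the single tail event $A_n$, after which C4 together with the constraint $r_1 < \tfrac{r}{8+r}$ controls everything.
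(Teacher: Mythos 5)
Your proposal is correct and follows essentially the same route as the paper: a union bound over all $np$ coordinates, the tail bound from Condition C4, and the observation that $r_1 - c < \tfrac{r}{8+r}$ makes the exponential decay dominate the factor $np$. The only cosmetic difference is that you unify (i) and (ii) into a single event at the outset, whereas the paper proves (i) and notes that (ii) follows from it — the two framings are equivalent.
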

\begin{proof}[\bf Proof]
   For (i), Under Condition C4,
   \begin{align*}
       \Pr( \max_i\| \bX^{(i)} \|^4_\infty > n^{\frac{4}{8+r}} ) & \leq np \max_j \Pr(|X_j| > n^{\frac{1}{8+r}}) \\
       & = O_p\left\{ np \cdot \exp(-n^{\frac{r}{8+r}} ) \right\} \\
       & =o_p\left\{n \exp(n^{r_1-c}) \exp(-n^{\frac{r}{8+r}}) \right\}\\
       & =o_p \left[  n   \exp\left\{-n^{r_1}  (1 - n^{-c}) \right\}  \right ]\\
       & = o_p(1).
   \end{align*}
   Then $(ii)$ is implied by $(i)$.
\end{proof}

\begin{lemma}\label{lem:sconverge}
    Under Conditions C1-C4, for any constant $r_1 \in (0, \frac{r}{8+r})$ and $r_2= \min(\frac{r}{32+4r}, \\ \frac{4+r}{32+4r}-\frac{r_1}{4})$, let $p=o\{\exp(n^{r_1-c})\}$ and $m_0=o(n^{r_2-c})$ for some arbitrarily small $c\in (0, \min\{r_1,r_2\} )$, then $\| \bS - \bTheta \|_{max} = o_p(m_0^{-2})$.
\end{lemma}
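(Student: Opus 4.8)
The plan is to bound the entrywise maximum $\|\bS - \bTheta\|_{max} = \max_{j\le k}|S_{jk}-\theta_{jk}|$ through a union bound over the $m = p(p+1)/2 = O(p^2)$ entries, paired with a per-entry concentration inequality for $S_{jk}-\theta_{jk} = n^{-1}\sum_{i=1}^n\{X_j^{(i)}X_k^{(i)} - \theta_{jk}\}$, an average of i.i.d.\ terms with mean $\theta_{jk}$. The obstacle is twofold: the summands $X_j^{(i)}X_k^{(i)}$ inherit only exponential tails from Condition C4 rather than sub-Gaussian ones (indeed $\Pr(|X_jX_k|>s)\le 2c\exp(-s^{r/2})$, since $|X_jX_k|>s$ forces $\max(|X_j|,|X_k|)>\sqrt{s}$), so a direct sub-Gaussian bound is unavailable; and $p$ grows sub-exponentially, so the union bound must absorb $O(p^2)$ terms.

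To circumvent the heavy tails, the first step is to pass to the high-probability truncation event provided by Lemma \ref{lem:hoeffding}. Writing $M = n^{2/(8+r)}$ and $A = \{\max_i\|\bX^{(i)}\|_\infty^2 \le M\}$, part (ii) of that lemma gives $\Pr(A^c)\to 0$. On $A$ every product obeys $|X_j^{(i)}X_k^{(i)}|\le M$, so $S_{jk}$ coincides with the truncated mean $\bar Y_{jk} = n^{-1}\sum_i \tilde Y_i$ of $\tilde Y_i = X_j^{(i)}X_k^{(i)}\mathbb{I}(|X_j^{(i)}X_k^{(i)}|\le M)$, whose summands lie in $[-M,M]$. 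This reduces each entry to a bounded-range average to which Hoeffding's inequality applies.

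On $A$ I would split $|S_{jk}-\theta_{jk}| \le |\bar Y_{jk} - \E\tilde Y| + |\E\tilde Y - \theta_{jk}|$. The truncation bias $|\E\tilde Y - \theta_{jk}| = |\E\{X_jX_k\mathbb{I}(|X_jX_k|>M)\}|$ is handled by Cauchy--Schwarz together with the tail bound and the uniform boundedness of $\var(X_jX_k)$ coming from Conditions C1 and C4, yielding $O(\exp(-M^{r/2}/2)) = O(\exp(-n^{r/(8+r)}/2))$; because the target tolerance $m_0^{-2}$ decays only polynomially, this bias is eventually negligible against any $t = \delta m_0^{-2}$. Hoeffding's inequality then gives $\Pr(|\bar Y_{jk} - \E\tilde Y| > t/2) \le 2\exp\{-nt^2/(8M^2)\}$.

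Assembling the union bound, for arbitrary fixed $\delta>0$ and $t = \delta m_0^{-2}$,
\[
\Pr(\|\bS-\bTheta\|_{max} > t) \le \Pr(A^c) + 2m\exp\left\{-\frac{nt^2}{8M^2}\right\}.
\]
The exponent satisfies $nt^2/(8M^2) \asymp \delta^2 m_0^{-4}\, n^{1-4/(8+r)} = \delta^2 m_0^{-4}\, n^{(4+r)/(8+r)}$, and the calibration $m_0 = o(n^{r_2-c})$ with $r_2 \le (4+r)/(32+4r) - r_1/4$ (the second term of the defining minimum) is chosen precisely so that this quantity grows faster than $n^{r_1}$, dominating $\log m \lesssim \log p^2 = o(n^{r_1})$; hence the sum vanishes and $m_0^2\|\bS-\bTheta\|_{max}\to 0$ in probability. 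The main obstacle is exactly this bookkeeping: threading the Hoeffding exponent simultaneously past the $O(p^2)$ union-bound penalty and the shrinking tolerance $m_0^{-2}$, which is where the relation between the admissible rates $r_1$, $r_2$ and the tail index $r$ must be matched, and where Conditions C1--C4 together with Lemma \ref{lem:hoeffding} enter.
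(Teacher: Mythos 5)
Your proposal is correct and follows essentially the same route as the paper's proof: truncate via Lemma \ref{lem:hoeffding} so the products $X_j^{(i)}X_k^{(i)}$ are bounded by $n^{2/(8+r)}$, apply Hoeffding's inequality entrywise, and union-bound over the $O(p^2)$ entries, with the same rate bookkeeping showing the exponent $n^{(4+r)/(8+r)-4r_2}$ dominates both $\log p^2 = o(n^{r_1})$ and the tolerance $m_0^{-2}$ via the definition of $r_2$. If anything, your treatment is slightly more careful than the paper's, since you explicitly control the truncation bias $|\E\{X_jX_k\mathbb{I}(|X_jX_k|>M)\}|$ as exponentially small, whereas the paper applies Hoeffding after conditioning on the truncation event without addressing the resulting (negligible) shift in the mean.
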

\begin{proof}[\bf Proof]
    By Lemma \ref{lem:hoeffding} and the definition of $\bS$, the cores of $\bS$, $X^{(i)}_jX^{(i)}_k$ ($1\le j\le k\le p$), are uniformly bounded by $n^{\frac{2}{8+r}}$ with probability tending to one. Thus it suffices to prove the lemma when $\max_{i,j,k}\{X^{(i)}_jX^{(i)}_k\}<n^{\frac{2}{8+r}}$. We show the rest of the proof under this condition while omitting the conditioning notation for simplicity. 
    
    By the definition of $\bS$, since $\max_{i,j,k}\{X^{(i)}_jX^{(i)}_k\}<n^{\frac{2}{8+r}}$, we apply the Hoeffding's inequality for the centred U-statistic $S_{jk} - \bTheta_{jk}$ to have
    $$
    \Pr\left(| S_{jk} - \bTheta_{jk}|> n^{-2r_2} \right) \leq 2\exp\left \{ - 2n^{\frac{4+r}{8+r}} n^{-4r_2}  \right \} = 2\exp\left \{ -n^{r_1}  \right \}.
    $$
    Thus,
        $$
    \Pr(\max_{jk}| S_{jk} - \bTheta_{jk}|> n^{-2r_2} ) \leq 2p^2\exp\left \{ -n^{r_1} \right \} = o(1),
    $$
    since $p = o\{\exp(n^{r_1-c})\}$ for some $c>0$.
\end{proof}

\begin{lemma}\label{lem:jconverge}
    Under Conditions C1-C4, for any constant $r_1 \in (0, \frac{r}{8+r})$ let $p=o\{\exp(n^{r_1-c})\}$ for some arbitrarily small $c\in(0,r_1)$, then $\| \hat \bJ - \bJ \|_{max} = o_p(1)$.
\end{lemma}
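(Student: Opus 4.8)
The plan is to show entrywise convergence of $\hat\bJ$ to $\bJ$ in the max norm, by exploiting the fact that each entry of $\hat\bJ$ is an average of functions of the data evaluated at the plug-in point $\bs=\text{vech}(\bS)$, whereas the corresponding entry of $\bJ$ is an expectation evaluated at the true $\btheta$. I would decompose the difference into two sources of error: first, a \emph{statistical} error arising from replacing expectations by sample averages at a \emph{fixed} argument, and second, a \emph{plug-in} error arising from evaluating the score products at $\bs$ rather than at $\btheta$. Writing $\bJ(\btheta)=\E[\bbU(\btheta;\bX)^\top\bbU(\btheta;\bX)]$ and $\hat\bJ=n^{-1}\sum_i\bbU(\bs;\bX^{(i)})^\top\bbU(\bs;\bX^{(i)})$, I would insert the intermediate quantity $n^{-1}\sum_i\bbU(\btheta;\bX^{(i)})^\top\bbU(\btheta;\bX^{(i)})$ and bound
$$
\|\hat\bJ-\bJ\|_{max}\le \Big\| \tfrac1n\sum_i\bbU(\bs;\bX^{(i)})^\top\bbU(\bs;\bX^{(i)})-\tfrac1n\sum_i\bbU(\btheta;\bX^{(i)})^\top\bbU(\btheta;\bX^{(i)})\Big\|_{max}+\Big\|\tfrac1n\sum_i\bbU(\btheta;\bX^{(i)})^\top\bbU(\btheta;\bX^{(i)})-\bJ\Big\|_{max}.
$$

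For the statistical term, each entry is a centered average of products of score components, which are smooth rational functions of the bounded $X^{(i)}_jX^{(i)}_k$. Using Lemma \ref{lem:hoeffding} to restrict to the event $\max_{i,j,k}|X^{(i)}_jX^{(i)}_k|<n^{2/(8+r)}$, these summands are bounded, so a Hoeffding-type concentration bound controls each entry; a union bound over the $O(m^2)=O(p^4)$ entries, combined with $p=o\{\exp(n^{r_1-c})\}$, drives the maximum to zero in probability, exactly as in the proof of Lemma \ref{lem:sconverge}. For the plug-in term, I would use the fact that the score components in (\ref{eq:u_{jk}}) are continuously differentiable in $\btheta$ away from the singularity $\theta_{jj}\theta_{kk}-\theta_{jk}^2=0$, which Condition C1 keeps uniformly bounded away from; a first-order Taylor/mean-value argument then bounds the plug-in error by a constant (depending on the bounded data via the restricted event) times $\|\bs-\btheta\|_{max}=\|\bS-\bTheta\|_{max}$, which is $o_p(1)$ by Lemma \ref{lem:sconverge} (indeed $o_p(m_0^{-2})$, far stronger than needed here).

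The main obstacle is the plug-in term: the score products are ratios with denominators $(\theta_{jj}\theta_{kk}-\theta_{jk}^2)^2$, so I must verify that the Lipschitz/derivative bounds for these rational functions are \emph{uniform} over all pairs $jk$ and hold with probability tending to one. This requires combining Condition C1 (uniform bounds on $\bTheta$ and its nonzero entries bounded away from zero, so the determinants stay bounded away from zero) with the uniform data bound from Lemma \ref{lem:hoeffding} and the closeness $\bS\to\bTheta$ from Lemma \ref{lem:sconverge} (ensuring the plug-in denominators built from $S_{jj}S_{kk}-S_{jk}^2$ are also bounded away from zero with high probability, so the Taylor expansion is valid along the whole segment between $\bs$ and $\btheta$). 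Once uniform boundedness of the numerators, denominators, and their gradients is established on the high-probability event, both error terms are $o_p(1)$ and the max-norm bound follows. Note that only $r_1$, hence only the tail rate controlling $p$, enters this lemma — unlike Lemma \ref{lem:sconverge}, no sparsity rate $m_0$ is needed, since we require only $o_p(1)$ rather than a sharp rate.
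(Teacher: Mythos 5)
Your proposal follows essentially the same route as the paper's own proof: the paper also inserts the intermediate quantity $\tilde\bJ = n^{-1}\sum_i \bbU(\btheta;\bX^{(i)})^\top\bbU(\btheta;\bX^{(i)})$, controls $\|\tilde\bJ-\bJ\|_{max}$ by Hoeffding's inequality on the bounded-data event from Lemma \ref{lem:hoeffding} with a union bound over $O(p^4)$ entries, and controls the plug-in term $\|\hat\bJ-\tilde\bJ\|_{max}$ by a Lipschitz-type bound of the form $c\cdot\max_{j,k,l,s}|n^{-1}\sum_i X^{(i)}_jX^{(i)}_kX^{(i)}_lX^{(i)}_s|\cdot\|\bS-\bTheta\|_{max}$, which is $O_p(1)\cdot o_p(1)$ by Condition C4 and Lemma \ref{lem:sconverge}. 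Your mean-value formulation of the plug-in bound, and your observation that only the $o_p(1)$ (rather than $o_p(m_0^{-2})$) version of Lemma \ref{lem:sconverge} is needed, are both consistent with the paper's argument.
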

\begin{proof}
    By Lemma \ref{lem:hoeffding}, $X^{(i)}_jX^{(i)}_kX^{(i)}_lX^{(i)}_s$ ($1\le j,k,l,s\le p$), are uniformly bounded by $n^{\frac{4}{8+r}}$ with probability tending to one. Thus it suffices to prove the lemma when $\max_{i,j,k,l,s}\{X^{(i)}_jX^{(i)}_kX^{(i)}_lX^{(i)}_s\} < n^{\frac{4}{8+r}}$. We show the rest of the proof under this condition while omitting the conditioning notation for simplicity. 

    Recall that 
    \begin{align} 
    \hat \bJ &= \dfrac{1}{n} \sum_{i=1}^n \bbU(\bs; \bX^{(i)})^\top  \bbU(\bs; \bX^{(i)}),
    \end{align}
    where $\bs$ is the vector containing sample covariance. To apply the Hoeffding's inequality for centred U-statistic, we let 
        \begin{align} 
    \tilde \bJ &= \dfrac{1}{n} \sum_{i=1}^n \bbU(\btheta; \bX^{(i)})^\top  \bbU(\btheta; \bX^{(i)}),
    \end{align}
    so that $\tilde \bJ$ is constructed by independent variables. By Lemma \ref{lem:sconverge}, elements in $\bS$ are uniformly bounded in probability, thus, it suffices to prove the lemma when it is the case. By the definition of $\bbU(\bs; \bX^{(i)})$, Condition C1 and Lemma \ref{lem:sconverge},
    \begin{align*}
        \| \hat J -\tilde J \|_{max} \le  c \cdot \max_{j,k,l,s}\left |  \frac{1}{n}\sum_{i=1}^n  X^{(i)}_jX^{(i)}_kX^{(i)}_lX^{(i)}_s   \right | \|\bS-\bTheta\|_{max},  
    \end{align*}
    for some constant $c>0$. Under Condition C4, the Hoeffding's inequality implies 
    $$
        \max_{j,k,l,s}\left |  \frac{1}{n}\sum_{i=1}^n  X^{(i)}_jX^{(i)}_kX^{(i)}_lX^{(i)}_s   \right | = O_p(1).
    $$
    Thus by Lemma \ref{lem:sconverge}, $\|  \hat J -\tilde J \|_{max} =o_p(m_0^{-2}) = o_p(1) $. 
    
    Although each $\bu_{jk}(\btheta,X)$ ($j<k$) is a length $m$ vector, it has at most three nonzero elements given in (\ref{eq:u_{jk}}).  Thus by definition, all elements of $\tilde \bJ$ are bounded by some constant times $n^{\frac{4}{8+r}}$ as stated at the beginning of the proof. The Hoeffding's inequality for the centred U-statistic implies that for any $\delta>0$,
        $$
    \Pr( \| \tilde \bJ - \bJ \|_{max}>\delta) \leq 2p^4\exp\left \{ - c \cdot n^{\frac{r}{8+r}} \delta^2  \right \} = o(1),
    $$
    for some $c>0$. Thus $\| \tilde \bJ - \bJ \|_{max} = o_p(1)$, implying $\| \hat \bJ - \bJ \|_{max} = o_p(1)$.
\end{proof}

\begin{lemma}\label{lem:jepsconverge}
    Under Conditions C1-C4, for any constant $r_1 \in (0, \frac{r}{8+r})$ and $r_2= \min(\frac{r}{32+4r}, \\ \frac{4+r}{32+4r}-\frac{r_1}{4})$, let $m_0=o(n^{r_2-c})$ for some arbitrarily small $c\in (0, \min\{r_1,r_2\} )$, then we have $\|  \hat \bJ_\xi - \bJ_\xi \|_{max} = o_p(m_0^{-2})$.
\end{lemma}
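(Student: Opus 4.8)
The plan is to mirror the argument used for Lemma~\ref{lem:jconverge}, but to exploit the fact that restricting to the index set $\xi$ reduces the number of entries that must be controlled from $O(p^4)$ to $O(m_0^2)$, which is precisely what buys the faster rate $o_p(m_0^{-2})$. First I would reintroduce the oracle version $\tilde\bJ = n^{-1}\sum_{i=1}^n \bbU(\btheta;\bX^{(i)})^\top\bbU(\btheta;\bX^{(i)})$ that plugs in the true $\btheta$ rather than the sample $\bs$, so that $\tilde\bJ$ is an average of independent terms, and split
\[
\|\hat\bJ_\xi - \bJ_\xi\|_{max} \le \|\hat\bJ_\xi - \tilde\bJ_\xi\|_{max} + \|\tilde\bJ_\xi - \bJ_\xi\|_{max}.
\]

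For the first (plug-in) term essentially no new work is needed. The bound $\|\hat\bJ - \tilde\bJ\|_{max} = o_p(m_0^{-2})$ was already established within the proof of Lemma~\ref{lem:jconverge}: it follows from the factorization of $\hat\bJ - \tilde\bJ$ through $\|\bS - \bTheta\|_{max}$, the $O_p(1)$ boundedness of the quartic averages (a consequence of Lemma~\ref{lem:hoeffding} and Condition C4), and the rate $\|\bS-\bTheta\|_{max} = o_p(m_0^{-2})$ supplied by Lemma~\ref{lem:sconverge}. Since the max norm of a submatrix never exceeds that of the full matrix, $\|\hat\bJ_\xi - \tilde\bJ_\xi\|_{max} \le \|\hat\bJ - \tilde\bJ\|_{max} = o_p(m_0^{-2})$, so this term is already controlled and the restriction to $\xi$ plays no role here.

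The substance of the lemma lies in the second term. Conditioning on the event $\{\max_{i,j,k,l,s} X_j^{(i)}X_k^{(i)}X_l^{(i)}X_s^{(i)} < n^{4/(8+r)}\}$, which holds with probability tending to one by Lemma~\ref{lem:hoeffding}(i), every summand in a given entry of $\tilde\bJ$ is bounded by a constant multiple of $n^{4/(8+r)}$, because each score $\bu_{jk}(\btheta;\bX)$ has at most three nonzero entries built from quadratic monomials in the $X$'s with coefficients bounded under Condition C1, so each entry of $\bbU^\top\bbU$ is quartic. Applying Hoeffding's inequality to a single entry with deviation $\delta m_0^{-2}$ then yields a tail of order $\exp\{-c\, n^{r/(8+r)}\, m_0^{-4}\}$. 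The essential gain is the union bound: $\xi$ consists of the $m_0$ active off-diagonal indices together with their at most $2m_0$ associated diagonal indices, so $|\xi| = O(m_0)$ and $\tilde\bJ_\xi$ has only $|\xi|^2 = O(m_0^2)$ entries, whence
\[
\Pr\!\left(\|\tilde\bJ_\xi - \bJ_\xi\|_{max} > \delta m_0^{-2}\right) \le 2\,|\xi|^2 \exp\{-c\, n^{r/(8+r)}\, m_0^{-4}\} = O(m_0^2)\exp\{-c\, n^{r/(8+r)}\, m_0^{-4}\}.
\]

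The main obstacle, and the reason the specific rate constraint $r_2 \le r/(32+4r)$ appears, is verifying that the exponent diverges fast enough to kill the polynomial prefactor $m_0^2$. Here I would use $m_0 = o(n^{r_2-c})$ to bound $m_0^{-4} \gtrsim n^{-4(r_2-c)}$, so that $n^{r/(8+r)} m_0^{-4} \gtrsim n^{\,r/(8+r) - 4r_2 + 4c}$. The identity $4\cdot r/(32+4r) = r/(8+r)$ shows that $r_2 \le r/(32+4r)$ makes $r/(8+r) - 4r_2 \ge 0$, and the strict slack $+4c$ coming from the assumption $m_0 = o(n^{r_2-c})$ (rather than merely $O(n^{r_2})$) pushes the exponent to at least order $n^{4c}$, which grows polynomially and dominates $m_0^2$. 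This forces the probability to zero and gives $\|\tilde\bJ_\xi - \bJ_\xi\|_{max} = o_p(m_0^{-2})$; combining the two terms completes the proof. The only real care is bookkeeping the constants through Conditions C1 and C4 and confirming that the $+4c$ margin is genuinely available, which it is by the strict $o(\cdot)$ sparsity assumption.
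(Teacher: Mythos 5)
Your proof is correct and takes essentially the same route as the paper's: the same decomposition through the oracle matrix $\tilde \bJ$ built at the true $\btheta$, reuse of the plug-in bound $\| \hat \bJ - \tilde \bJ \|_{max} = o_p(m_0^{-2})$ already established in the proof of Lemma \ref{lem:jconverge}, and the same entry-wise Hoeffding bound with a union bound over the $|\xi|^2 \le (3m_0)^2$ entries of $\tilde \bJ_\xi$. Your rate bookkeeping is in fact slightly more explicit than the paper's, which invokes the condition with an apparent typo ($4r_2 \le \tfrac{r}{8+4r}$ should read $4r_2 \le \tfrac{r}{8+r}$, exactly the identity $4 \cdot \tfrac{r}{32+4r} = \tfrac{r}{8+r}$ you verify).
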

\begin{proof}
    Similar to the proof in Lemma \ref{lem:jconverge}, we let $\tilde J$ as defined in the proof of Lemma \ref{lem:jconverge}. Then with probability tending to 1,  $\|  \hat J_\xi -\tilde J_\xi \|_{max} =o_p(m_0^{-2}) $. By the definition of $\tilde \bJ$, the elements of $\hat \bJ$ are bounded by some constant times $n^{\frac{4}{8+r}}$ with probability tending to one by Lemma \ref{lem:hoeffding}. Thus it suffices to prove the lemma in this case. The Hoeffding's inequality for the centred U-statistic implies that for some constant $c_1>0$ and $c_2>0$,
    \begin{align*}
    \Pr\left( \| \tilde \bJ_\xi - \bJ_\xi \|_{max}> m_0^{-2} \right) 
    & \leq 2\vert \xi \vert^2\exp\left \{ -c_1 \cdot n^{\frac{r}{8+r}}  m_0^{-4}  \right \} \\
    & \leq 2\vert \xi \vert^2\exp\left \{ -c_1 \cdot n^{\frac{r}{8+r}}  n^{-4r_2+c_2}  \right \} \\
    & = o(1),
    \end{align*}
    since $\vert \xi \vert \le 3m_0$ and $4r_2\le \frac{r}{8+4r}$. Thus $\|  \tilde \bJ_\xi - \bJ_\xi \|_{max} = o_p(m_0^{-2})$, implying that $\|  \hat \bJ_\xi - \bJ_\xi \|_{max} = o_p(m_0^{-2})$.
\end{proof}

\begin{lemma} \label{lem:wnorm}
    Under Conditions C1-C4, for any constant $r_1 \in (0, \frac{r}{8+r})$ and $r_2= \min(\frac{r}{32+4r}, \\ \frac{4+r}{32+4r}-\frac{r_1}{4})$, let $m_0=o(n^{r_2-c})$ for some arbitrarily small $c\in (0, \min\{r_1,r_2\} )$. If $\Pr(\hatEps \subseteq \Eps) \to 1$ as $n\to\infty$, then $\| \hat \bw_\xi \|_1 = O_p(m_0)$ and $\hat \bw_{\Eps\setminus \xi}$ is a vector of ones with probability converging to 1.
\end{lemma}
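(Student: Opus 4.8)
The plan is to condition throughout on the event $\{\hatEps\subseteq\Eps\}$, which by hypothesis has probability tending to one, and to exploit the sparsity pattern of the marginal and bivariate scores in (\ref{eq:score_mar})--(\ref{eq:u_{jk}}) to split the optimizer into a trivial part on the isolated diagonal entries and a low-dimensional part supported on $\xi$. First I would settle the claim on $\hat\bw_{\Eps\setminus\xi}$. Because the threshold in (\ref{eq:epsilonhat}) carries the factor $\mathbb I(j\neq k)$, every diagonal index $jj$ lies in $\hatEps$ automatically. Fix an isolated diagonal $jj\in\Eps\setminus\xi$, so $\theta_{jl}=0$ for all $l\neq j$. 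The marginal score $\bu_{jj}$ has its only nonzero coordinate at the position of $\theta_{jj}$, and a bivariate score $\bu_{rs}$ touches that coordinate only when $r=j$ or $s=j$; any such $\bu_{jl}$ corresponds to a zero covariance and, on $\{\hatEps\subseteq\Eps\}$, is not selected. Hence the $jj$th row of $\hat\bJ$ has, within $\hatEps$, only the diagonal entry $\hat\bJ_{jj,jj}\neq 0$, so the $jj$th KKT equation in (\ref{eq:kkt}) collapses (using $\eta_{jj}=0$) to $\hat\bJ_{jj,jj}\hat w_{jj}=\hat\bJ_{jj,jj}$, i.e. $\hat w_{jj}=1$. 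Since this holds simultaneously for all isolated diagonals, $\hat\bw_{\Eps\setminus\xi}$ is the vector of ones with probability tending to one.

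For $\|\hat\bw_\xi\|_1$, the same decoupling shows $\hat\bJ_{\hatEps}$ is block diagonal, the isolated diagonals forming scalar blocks and the remaining indices $A:=\hatEps\cap\xi$ forming one block with $|A|\le|\xi|\le 3m_0$. On this block the formula of Theorem \ref{theory:uni} gives $\hat\bw_A=\hat\bJ_A^{-1}\big(\hat\bh_A-\tfrac{\lambda}{n}\bEta_A\big)$, and since $\hat\bw_{\xi\setminus\hatEps}=0$ we have $\|\hat\bw_\xi\|_1=\|\hat\bw_A\|_1$. I would bound the two factors separately. The right-hand vector is controlled in $\|\cdot\|_\infty$: the entries of $\hat\bh_A$ are diagonal elements of $\hat\bJ$, bounded via Lemma \ref{lem:jepsconverge} and Condition C2, while each penalty entry satisfies $|\eta_{jk}|\le S_{jk}^{-2}=O_p(1)$ for $jk\in\xi$, because $S_{jk}\to\theta_{jk}$ is bounded away from zero by Lemma \ref{lem:sconverge} and Condition C1, and $\lambda/n=O(1)$ under the stated rate for $\gamma$. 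Thus $\|\hat\bh_A-\tfrac{\lambda}{n}\bEta_A\|_\infty=O_p(1)$, and it remains to show $\|\hat\bJ_A^{-1}\|_\infty=O_p(1)$, which would yield $\|\hat\bw_A\|_\infty=O_p(1)$ and hence $\|\hat\bw_\xi\|_1\le|A|\,\|\hat\bw_A\|_\infty=O_p(m_0)$.

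The crux, and the step I expect to be hardest, is the uniform bound $\|\hat\bJ_A^{-1}\|_\infty=O_p(1)$ over the \emph{random} support $A\subseteq\xi$. My plan is to obtain it at the population level and transfer it by perturbation. Since two scores overlap only when their index pairs share a variable, $\bJ_\xi$ is block diagonal along the connected components of the nonzero-covariance graph, so Condition C2 together with this component structure bounds the operator norm of $\bJ_\xi^{-1}$ and, by positive definiteness and eigenvalue interlacing, of $\bJ_A^{-1}$ for every $A\subseteq\xi$. Lemma \ref{lem:jepsconverge} gives $\|\hat\bJ_\xi-\bJ_\xi\|_{max}=o_p(m_0^{-2})$, hence $\|\hat\bJ_A-\bJ_A\|_\infty\le|\xi|\,\|\hat\bJ_\xi-\bJ_\xi\|_{max}=o_p(m_0^{-1})$, and a Neumann-series argument then carries invertibility and the norm bound from $\bJ_A$ to $\hat\bJ_A$ simultaneously for all $A$. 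The delicate point is exactly that the $o_p(m_0^{-2})$ rate is calibrated so that, after multiplication by the dimension $|\xi|=O(m_0)$, the perturbation stays negligible; establishing that the component structure of $\bJ_\xi$ furnishes a genuine $O(1)$ operator-norm bound on $\bJ_A^{-1}$ (rather than the weaker $O(m_0)$ bound implied by bounded entries alone, which would only give $\|\hat\bw_\xi\|_1=O_p(m_0^2)$) is the part that must be handled with care.
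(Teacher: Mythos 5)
Your first step is sound and matches the paper: on the event $\{\hatEps\subseteq\Eps\}$ every score attached to an isolated diagonal $jj\in\Eps\setminus\xi$ is orthogonal to all other selected scores, so its (unpenalized) KKT equation collapses to $\hat\bJ_{jj,jj}\hat w_{jj}=\hat\bJ_{jj,jj}$ and $\hat w_{jj}=1$. The second half, however, has two genuine gaps. The first is circularity: you bound the penalty term in $\hat\bw_A=\hat\bJ_A^{-1}\bigl(\hat\bh_A-\tfrac{\lambda}{n}\bEta_A\bigr)$ by asserting $\lambda/n=O(1)$ ``under the stated rate for $\gamma$.'' But Lemma \ref{lem:wnorm} assumes no rate for $\gamma$ at all, and the only available control of the random penalty, $\hat\lambda=O(\gamma)O_p(m_0)$, is Lemma \ref{lem:lambda} --- whose proof itself invokes Lemma \ref{lem:wnorm}. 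The paper is structured precisely to avoid this: on $\{\hatEps\subseteq\Eps\}$ it compares the penalized optimum with the \emph{unpenalized} one, showing $\Pr\bigl(\|\hat\bw\|_1\le\|\tilde\bw_\Eps\|_1\bigr)\to 1$ with $\tilde\bw_\Eps=\hat\bJ_\Eps^{-1}\hat\bh_\Eps$, so that no quantitative bound on $\hat\lambda$ is ever needed; the analysis then targets $\|\tilde\bw_\xi\|_\infty$ rather than $\hat\bw_\xi$ directly.

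The second gap is the step you yourself flag as the crux: $\|\hat\bJ_A^{-1}\|_\infty=O_p(1)$ in operator norm, uniformly over the random support $A\subseteq\xi$. Your proposed justification does not deliver it. The zero pattern of $\bJ_\xi$ follows the connected components of the nonzero-covariance graph, and these components can have size of order $m_0$ (e.g., the block-diagonal designs of Section \ref{sec:MC}), so Condition C2 --- which only bounds the \emph{entries} of $\bJ_\xi^{-1}$ --- still gives only an $O(m_0)$ operator-norm bound on such a component, exactly the weak bound you were trying to escape; nor does C2 bound $\lambda_{\min}(\bJ_\xi)$ away from zero, so eigenvalue interlacing yields nothing for sub-blocks $\bJ_A$. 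The paper sidesteps operator norms entirely: from $\hat\bJ_\xi\tilde\bw_\xi=\hat\bh_\xi$ and $\bJ_\xi\bw_\xi^\ast=\bh_\xi$ it writes $\bJ_\xi(\tilde\bw_\xi-\bw_\xi^\ast)=(\hat\bh_\xi-\bh_\xi)+(\bJ_\xi-\hat\bJ_\xi)\tilde\bw_\xi$ and uses the inequality $\|\bJ_\xi^{-1}\bv\|_\infty\le\|\bJ_\xi^{-1}\|_{max}\|\bv\|_1$, i.e., only the max entry of the inverse (given by C2) times $L_1$ norms of the perturbations; the rate $o_p(m_0^{-2})$ of Lemma \ref{lem:jepsconverge} is calibrated so that the resulting factor $(3m_0)^2$ is absorbed, producing the self-bounding relation $\|\tilde\bw_\xi\|_\infty\le O(1)+o_p(1)\,\|\tilde\bw_\xi\|_\infty$, hence $\|\tilde\bw_\xi\|_\infty=O_p(1)$ and $\|\hat\bw_\xi\|_1\le 3m_0\|\tilde\bw_\xi\|_\infty=O_p(m_0)$. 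Note also that C2 directly postulates boundedness of the entries of $\bw_\xi^\ast=\bJ_\xi^{-1}\bh_\xi$, which anchors this argument; your route would instead require a spectral-type condition that the paper never assumes.
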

\begin{proof}
    From the objective (\ref{eq:crit_true2}), if the lemma assumption holds, then $\Pr( \| \hat \bw \|_1 \leq \| \tilde \bw_\Eps \|_1  )\to 1$ as $n\to\infty$, where $\tilde \bw_\Eps = {\hat \bJ_{\Eps}}^{-1} \hat \bh_{\Eps}$ is a length $\vert \Eps \vert$ vector optimizing $\bw^\top \hat \bJ_\Eps \bw /2 - \bw^\top \hat \bh_\Eps$. Let $\bw_\Eps^\ast = \bJ_\Eps^{-1} \bh_\Eps$ be the minimizer of $\bw^\top \bJ_\Eps \bw /2 - \bw^\top \bh_\Eps$. Then $\hat \bJ_\Eps \tilde \bw_\Eps = \hat \bh_\Eps$ and $\bJ_\Eps \bw_\Eps^\ast = \bh_\Eps$. These two equations imply that $\bJ_\Eps(\tilde \bw_\Eps - \bw_\Eps^\ast) = \hat \bh_\Eps - \bh_\Eps + (\bJ_\Eps - \hat \bJ_\Eps)\tilde \bw_\Eps$. Moreover, by the definition of \ref{eq:hatJ}, $\hat \bJ_\Eps \bw = ( (\hat \bJ_\xi \bw_\xi)^\top, (\hat \bJ_{\Eps\setminus \xi} \bw_{\Eps\setminus \xi})^\top)^\top$ for any length $\vert \Eps \vert$ vector $\bw$. I.e., the equation can be separated into two, each corresponding to the subset $\xi$ and $\Eps\setminus \xi$. The same statement holds for $\bJ_\Eps$. By the KKT Condition \ref{eq:kkt}, this implies that $\hat \bw_{\Eps\setminus \xi} = \tilde \bw_{\Eps\setminus \xi} = \bw_{\Eps\setminus \xi}^\ast$ are vectors of ones, since $\hat \bJ_{\Eps\setminus \xi} $ and $\bJ_{\Eps\setminus \xi}$ are diagonal by definition when $\hatEps \subseteq \Eps$. Thus $\Pr( \| \hat \bw_\xi \|_1 \leq \| \tilde \bw_\xi \|_1  )\to 1$ as $n\to\infty$.
    
    For the equation corresponding to $\xi$, we have
    \begin{align*}
    \| \tilde \bw_\xi - \bw_\xi^\ast\|_\infty &\leq \| \bJ_\xi^{-1}\|_{max} \left( \| \hat \bh_\xi - \bh_\xi \|_1 + \| (\bJ_\xi - \hat \bJ_\xi)\tilde \bw_\xi  \|_1 \right)\\
    & \leq \| \bJ_\xi^{-1}\|_{max} \left( 3m_0\| \hat \bh_\xi - \bh_\xi \|_\infty + (3m_0)^2 \|\bJ_\xi - \hat \bJ_\xi\|_{max} \| \tilde \bw_\xi  \|_\infty \right)\\
    & = O(1) \left \{o_p(1) + o_p(1)\| \tilde \bw_\xi\|_\infty \right \},
    \end{align*}
    where the last equation is implied by Condition C2 and Lemma \ref{lem:jepsconverge}. Note that $\|\bw_\xi^\ast\|_\infty = O(1)$ under Condition C2. Thus
    $
        \| \tilde \bw_\xi \|_\infty - \| \bw_\xi^\ast \|_\infty \leq \| \tilde \bw_\xi - \bw_\xi^\ast\|_\infty = o_p(1) + o_p(1)\| \tilde \bw_\xi\|_\infty,
    $ 
    which implies $ \| \tilde \bw_\xi \|_\infty = O_p(1)$. Thus, $\Pr\{ \| \hat \bw_\xi \|_1 \leq \| \tilde \bw_\xi \|_1 \leq 3m_0\| \tilde \bw_\xi \|_\infty \le c m_0  \}\to 1$ as $n\to\infty$, for some constant $c>0$. Thus, $\| \hat \bw_\xi \|_1 = O_p(m_0)$.
\end{proof}

\begin{lemma}\label{lem:lambda}
    Under Conditions C1-C4, for any constant $r_1 \in (0, \frac{r}{8+r})$ and $r_2= \min(\frac{r}{32+4r},\\ \frac{4+r}{32+4r}-\frac{r_1}{4})$, let $m_0=o(n^{r_2-c})$ for some arbitrarily small $c\in (0, \min\{r_1,r_2\} )$. If $\Pr(\hatEps \subseteq \Eps) \to 1$ as $n\to\infty$, then $\hat \lambda = O(\gamma) O_p(m_0)$.
\end{lemma}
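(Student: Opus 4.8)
The plan is to identify $\hat\lambda$ with the entry threshold of a single ``boundary'' pair through the adaptive-thresholding reformulation of Section \ref{sec:first_order}, and then to bound that threshold by the product of the chi-square cutoff $\gamma$ and an adjusted score variance of order $m_0$. Recall from (\ref{eq:thetahat})--(\ref{eq:adj_error}) that an off-diagonal pair $jk$ is retained exactly when $nS^2_{jk}V_{jk}\ge\lambda$, where $V_{jk}=\widehat{\var}(\bu_{jk})-\widehat{\cov}(\bu_{jk},\hat\bw^\top\bu)$ is the adjusted variance appearing in $\mathrm{SE}^{\mathrm{adj}}_{jk}=n^{-1/2}V_{jk}^{-1/2}$. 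Writing $T_{jk}=nS^2_{jk}/(S^2_{jk}+S_{jj}S_{kk})$ for the test statistic in (\ref{eq:sel_lambda}), a pair with $T_{jk}\le\gamma$ (a ``failing'' pair) is active precisely for $\lambda\le nS^2_{jk}V_{jk}$, so the smallest $\lambda$ excluding every failing pair is $\hat\lambda=\max_{jk:\,T_{jk}\le\gamma}nS^2_{jk}V_{jk}$. Denoting by $jk^\ast$ the maximizing pair, the stationarity condition (\ref{eq:KKTj}) holds with equality at the boundary, giving the exact identity $\hat\lambda=nS^2_{jk^\ast}V_{jk^\ast}$ with $T_{jk^\ast}\le\gamma$.

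The first factor is then controlled by splitting $nS^2_{jk^\ast}=T_{jk^\ast}\,(S^2_{jk^\ast}+S_{j^\ast j^\ast}S_{k^\ast k^\ast})$. Here $T_{jk^\ast}\le\gamma$ by the definition of $jk^\ast$, while $S^2_{jk^\ast}+S_{j^\ast j^\ast}S_{k^\ast k^\ast}=O_p(1)$ because Condition C1 bounds the entries of $\bTheta$ and Lemma \ref{lem:sconverge} gives $\|\bS-\bTheta\|_{max}=o_p(1)$. Hence $nS^2_{jk^\ast}=O(\gamma)\,O_p(1)$.

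The crux is to show $V_{jk^\ast}=O_p(m_0)$. The variance term equals $\widehat{\var}(\bu_{jk^\ast})=\hat\bJ_{jk^\ast,jk^\ast}$, which is $O_p(1)$ by the boundedness of $\bJ$ under Condition C2 together with the convergence $\|\hat\bJ-\bJ\|_{max}=o_p(1)$ of Lemma \ref{lem:jconverge}. For the covariance term I would write $\widehat{\cov}(\bu_{jk^\ast},\hat\bw^\top\bu)=\sum_{st\ne jk^\ast}\hat w_{st}\hat\bJ_{jk^\ast,st}$ and observe that, by the explicit form of the scores in (\ref{eq:u_{jk}}), $\hat\bJ_{jk^\ast,st}=0$ unless $\{s,t\}$ meets $\{j^\ast,k^\ast\}$. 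Every contributing index is therefore connected to $j^\ast$ or $k^\ast$ and so belongs to $\xi$; in particular the unit weights attached to isolated diagonal terms never enter the sum. Bounding the surviving terms by $\|\hat\bJ\|_{max}\,\|\hat\bw_\xi\|_1$ and invoking $\|\hat\bw_\xi\|_1=O_p(m_0)$ from Lemma \ref{lem:wnorm} yields $|\widehat{\cov}(\bu_{jk^\ast},\hat\bw^\top\bu)|=O_p(1)\,O_p(m_0)$, whence $V_{jk^\ast}=O_p(m_0)$.

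Combining the three estimates gives $\hat\lambda=nS^2_{jk^\ast}V_{jk^\ast}=O(\gamma)\,O_p(1)\,O_p(m_0)=O(\gamma)\,O_p(m_0)$, which is the claim. I expect the main obstacle to lie in the first step: justifying that the infimum in (\ref{eq:sel_lambda}) is attained at a single boundary pair whose KKT stationarity is exactly tight, despite the fixed-point coupling by which $V_{jk}$ itself depends on $\hat\bw(\lambda)$. The plan is to evaluate every quantity at $\lambda=\hat\lambda$, where the active set is locally constant and (\ref{eq:KKTj}) holds with equality for the entering pair, so that the identity $\hat\lambda=nS^2_{jk^\ast}V_{jk^\ast}$ is exact and the subsequent bounds apply to the realized $\hat\bw$.
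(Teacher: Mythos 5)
Your argument is essentially the paper's own proof in different notation: the adaptive-thresholding retention rule $nS^2_{jk}V_{jk}\ge\lambda$ from Section \ref{sec:first_order} is algebraically identical to the KKT equality that the paper evaluates at the ``next pair about to enter'' (for that pair $\hat w_{jk}=0$, so $V_{jk}=\hat h_{jk}-\hat\bJ_{jk,\hatEps}\hat\bw_{\hatEps}$ up to sign), and your factorization $\hat\lambda=(nS^2_{jk^\ast})\cdot V_{jk^\ast}$ with $nS^2_{jk^\ast}=O(\gamma)O_p(1)$ (via C1 and Lemma \ref{lem:sconverge}) and $V_{jk^\ast}=O_p(m_0)$ (via Lemmas \ref{lem:jconverge} and \ref{lem:wnorm}) is exactly the paper's chain of bounds; the fixed-point subtlety you flag at the end is treated at the same level of rigor by the paper, which simply asserts the boundary-pair equality. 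One sub-claim in your covariance bound is wrong, though harmlessly so: it is not true that ``the unit weights attached to isolated diagonal terms never enter the sum.'' The marginal scores $\bu_{j^\ast j^\ast}$ and $\bu_{k^\ast k^\ast}$ overlap the bivariate score $\bu_{j^\ast k^\ast}$ in the positions of $\theta_{j^\ast j^\ast}$ and $\theta_{k^\ast k^\ast}$, so $\hat\bJ_{jk^\ast,\,j^\ast j^\ast}$ and $\hat\bJ_{jk^\ast,\,k^\ast k^\ast}$ are generically nonzero; and since $jk^\ast$ is a failing pair, $j^\ast$ or $k^\ast$ may well be isolated margins, in which case those diagonal indices lie in $\Eps\setminus\xi$, carry unit weights by Lemma \ref{lem:wnorm}, and do contribute. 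The paper accounts for precisely these terms through the extra $2\|\hat\bJ\|_{max}=O_p(1)$ summand in its bound $\hat\bJ_{jk,\hatEps}\hat\bw_{\hatEps}\le\hat\bJ_{jk,\xi}\hat\bw_{\xi}+2\|\hat\bJ\|_{max}$; inserting the same correction into your estimate gives $V_{jk^\ast}=O_p(m_0)+O_p(1)=O_p(m_0)$, so your conclusion is unaffected.
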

\begin{proof}
    By the assumption of the Lemma, $\Pr\{\hatEps \subseteq \Eps\} \to 1$ as $n\to\infty$. Thus, it suffices to consider the case where $\hatEps \subseteq \Eps$. By KKT condition (\ref{eq:kkt}) and the definition of $\hat\lambda$, the next pair $jk$ ($j<k$) about to enter $\hatEps$ satisfies $nS_{jk}^2 \leq \gamma (S_{jk}^2 + S_{jj}S_{kk})$ and
    $$
        \left |  S_{jk}^2 \left\{ \hat \bJ_{jk,\hatEps} \hat \bw_{\hatEps} - \hat h_{jk}  \right\} \right | = \frac{\hat \lambda}{n},
    $$
    where $h_{jk}$ is the element of $\bh$ corresponding to pair $jk$. Under Condition C1 and by Lemma \ref{lem:sconverge}, $0\leq S_{jk}^2 = O_p( \gamma/n) $. Moreover, with probability tending to one, $\hat\bJ_{jk,\hatEps} \hat\bw_{\hatEps} =  \hat\bJ_{jk,\hat\xi}\hat\bw_{\hatxi} + \hat\bJ_{jk,\hatEps\setminus\hat\xi}\hat\bw_{jk,\hatEps\setminus\hat\xi}\le \hat\bJ_{jk,\xi}\hat\bw_{\xi} + 2\| \hat \bJ \|_{max} = O(m_0)$ by Lemma \ref{lem:wnorm} and Lemma \ref{lem:jconverge}. Thus, by Condition C2, Lemma \ref{lem:jconverge} and Lemma \ref{lem:wnorm}, we have
    $
    \hat \lambda = O(\gamma) O_p(m_0).
    $
\end{proof}

\begin{proof}[\bf Proof of Theorem \ref{theory:fixedp}]
    We first show that $\Pr(\hatEps \subseteq \Eps) \to 1$. Under Condition C1, for all pairs $\{jk: j\leq k\}$, we have $S_{jk} \to \theta_{jk}$ in probability by the law of large number. Since $\theta_{jj}$ is bounded away from zero for all $j=1,\ldots,p$, we have $S_{jk}/(S_{jk}^2 + S_{jj}S_{kk})\to \theta_{jk}/(\theta_{jk}^2 + \theta_{jj}\theta_{kk})$ in probability. Thus there exists some constant $c>0$, such that $\Pr\{ S_{jk}/(S_{jk}^2 + S_{jj}S_{kk})  > c \ | \ \theta_{jk} = 0\} = o(1)$ and $\Pr\{ S_{jk}/(S_{jk}^2 + S_{jj}S_{kk})  < c \ | \ \theta_{jk} \neq 0\} = o(1)$. This implies that when $\gamma/n = c$, $\Pr(\hatEps \subseteq \Eps) \to 1$ as $n\to\infty$ by the definition of $\hat \lambda$ in (\ref{eq:sel_lambda}). 

    Next we show that $\hat \lambda/n=o_p(1)$. From the above discussion, it suffices to consider the case where $\hat\Eps \subseteq \Eps$. Note that $\Pr[ \max_{jk}\{ S_{jk}/(S_{jk}^2 + S_{jj}S_{kk}) \}  < c \ | \ \theta_{jk} \neq 0] = o(1)$. By the definition of $\hat \lambda$ and the KKT Condition, when $n$ is large, with probability arbitrarily close to 1, we have $S_{jk}^2 | \hat \bJ_{jk,\hatEps} \hat \bw_{\hatEps} - \hat h_{jk} | = \hat \lambda/n$, where $jk\in \setminus \Eps$ is some pair about to enter the set $\hatEps$ if $\hat \lambda$ decreases, and $\hat \bJ_{jk,\hatEps}$ is the row of $\hat \bJ$ corresponding to $jk$ and columns corresponding to $\hatEps$. With fixed $p$, convergence of $\hat \bJ$ to $\bJ$ is ensured by the law of large numbers. Thus, implied by the objective function in (\ref{eq:crit_true2}), $\| \hat \bw_{\hatEps} \|_1 \leq \| \tilde \bw_\Eps \|_1 \to \| \bw^\ast_\Eps \|_1 = O(1)$ in probability where $\tilde \bw_\Eps$ and $\bw_\Eps^\ast$ are the optimizers of $\bw^\top \hat \bJ_\Eps \bw/2 - \bw^\top \hat \bh_\Eps$ and $\bw^\top \bJ_\Eps \bw/2 - \bw^\top \bh_\Eps$, respectively. Since $\hat \bJ \to \bJ$ and $S_{jk}^2 \to 0$ in probability for $jk \in\setminus \Eps$, we have $\hat \lambda/n = S_{jk}^2| \hat \bJ_{jk,{\hatEps}}\hat \bw_{\hatEps} - \hat h_{jk} | = o_p(1)$

    Finally, we show that $\Pr( \Eps \subseteq \hatEps) \to 1$ as $n\to\infty$. From the above discussion, it suffices to show that every element of $\hat \bw_\Eps$ is nonzero, where $\hat \bw_\xi$ optimizes 
    $$ \frac{1}{2}\bw^\top \hat \bJ_\xi \bw/2 - \bw^\top \hat \bh_\xi + \frac{\hat \lambda}{n} \sum_{jk \in \xi} \frac{|w_{jk}|}{S_{jk}}^2. 
    $$
    Since $\hat \lambda/n =o_p(1)$ and $S_{jk} \to \theta_{jk}$ which is bounded away from zero for $jk \in \xi$, the above convex objective function of $\bw$ converges  point-wise to the strictly convex objective 
    $$ \frac{1}{2}\bw^\top \bJ_\xi \bw/2 - \bw^\top \bh_\xi. 
    $$
    Since $m_0$ is fixed, the optimizer of the former converges to the latter, i.e., $\hat \bw_\xi \to \bw_\xi^\ast$ in probability where $\bw_\xi^\ast$ is such that $\bJ_\xi \bw_\xi^\ast = \bh_\xi$. By Condition C3, all elements of $\bw_\xi^\ast$ are nonzero, implying the same for $\hat \bw_\xi$. Similar to the proof in Lemma \ref{lem:wnorm}, elements of $\hat \bw_{\setminus \xi}$ corresponding to the marginal scores are all equal one.
\end{proof}

\begin{proof}[\bf Proof of Theorem \ref{theory:false_positive}]
    Under Condition C1, for all $jk\in\setminus \Eps$, $E(S_{jk})=0$. Thus by the law of large number, $S_{jk} \to 0$, $S_{jj}\to\theta_{jj}$ and $S_{kk}\to\theta_{kk}$. Therefore, for any $jk\in\setminus \Eps$, $nS_{jk}^2/(S_{jk}^2 + S_{jj}S_{kk}) =o_p(n)$. However, by the definition of $\hat \lambda$ in (\ref{eq:sel_lambda}), for all $jk\in\hatEps$ and $j<k$, $nS_{jk}^2/(S_{jk}^2 + S_{jj}S_{kk}) > \gamma$. Thus for $\gamma^{-1} =o(n^{-1})$,
    $$
    \Pr\left ( \frac{nS_{jk}^2}{S_{jk}^2 + S_{jj}S_{kk}} > \gamma \ \bigg \vert \ jk \in \setminus \Eps\right ) \to 0,
    $$
    as $n\to \infty$. 
\end{proof}

\begin{proof}[\bf Proof of Theorem \ref{thm:consistency}]
    {\bf For part (i):} By definition of $\hat \lambda$ in (\ref{eq:sel_lambda}), for any $jk \in \hatEps$ and $j<k$, 
    $$\frac{nS^2_{jk} }{S_{jk}^2 + S_{jj}S_{kk}} > \gamma.$$ By Condition C1 and Lemma \ref{lem:sconverge}, there exists some constant $c>0$, $\Pr\{\min_{jk}(S_{jj}S_{kk})<c\}= o(1)$.
    Similarly, by Hoeffding's inequality for central U-statistic, $\Pr\{\max_{jk\in \setminus \Eps}(S_{jk}^2 )>c\cdot \gamma/n \}= o(1)$ for some $c>0$. Thus, there exist some constant $c_1>0$ for all $n$,
    \begin{align*}
        &\Pr\left\{ \max_{jk}\left(\frac{nS_{jk}^2}{S^2_{jk}+S_{jj}S_{kk}}\right)>\gamma \ \bigg| \ jk\in\setminus \Eps \right\}\\
        \leq &\Pr\left\{ \max_{jk}(nS_{jk}^2) > \gamma \min_{sl}(S_{ss}S_{ll}) \ \bigg| \ jk\in\setminus \Eps \right \}\\
        \leq & c_1 \cdot Pr\left\{ \max_{jk}(S_{jk}^2) > c \cdot \frac{\gamma}{n}   \ \bigg| \ jk\in\setminus \Eps \right \}\\
        \to & 0.
    \end{align*}
    That is $\Pr(\hatEps \cap \setminus \Eps = \emptyset) \to 1,$ as $n\to \infty$.

    {\bf For part (ii):} By part (i) of the theorem, $\Pr(\hatEps \subseteq \Eps) \to 1$ as $n\to \infty$. Thus it suffices to only consider the case when $\hatEps \subseteq \Eps$. By Lemma \ref{lem:wnorm}, $\hat w_{\Eps\setminus \hatxi}$ is a vector of ones. Moreover, by Condition C1 and Lemma \ref{lem:sconverge}, $S_{jk}$ are uniformly bounded and bounded away from zero, for all $jk\in\Eps$. Thus, the KKT condition (\ref{eq:kkt}) implies that
    $\| \hat \bJ_\Eps \hat \bw_\Eps - \hat \bh_\Eps \|_1 = O_p( \hat \lambda m_0/n)$. By Lemma \ref{lem:lambda} and theorem assumption, $\hat \lambda m_0/n = O_p(\gamma m_0^2/n)=o_p(1)$. Thus $\| \hat \bJ_\xi \hat \bw_\xi - \hat \bh_\xi \|_1 \le \| \hat \bJ_\Eps \hat \bw_\Eps - \hat \bh_\Eps \|_1 = o_p(1)$. Note that 
    \begin{align*}
        \| \hat \bJ_\xi \hat \bw_\xi -\hat \bh_\xi  \|_1 - \| \bJ_\xi \hat \bw_\xi - \bh_\xi \|_1 &\leq \left \| (\hat \bJ_\xi - \bJ_\xi) \hat \bw_\xi + (\bh_\xi - \hat \bh_\xi) \right \|_1\\
        & \leq m_0 \| \hat \bJ_\xi - \bJ_\xi \|_{max} \|\hat \bw_\xi \|_1 + m_0 \| \bh_\xi - \hat \bh_\xi \|_\infty \\
        & = o_p(1),
    \end{align*}
    where the last equation is implied by Lemma \ref{lem:jepsconverge} and Lemma \ref{lem:wnorm}. Thus $\| \bJ_\xi \hat \bw_\xi - \bh_\xi \|_1=o_p(1)$. Condition C3 implies that all elements of $\hat \bw_\xi$ are nonzero. That is $\Eps \subseteq \hatEps$.
\end{proof}

\end{appendix}

\bibliographystyle{abbrvnat} 
\bibliography{bibliography}
%% or include bibliography directly:
% \begin{thebibliography}{}
% \bibitem{b1}
% \end{thebibliography}

\end{document}